\theoremstyle{definition}
\newtheorem{theorem}{Theorem}
\newtheorem{assumption}{Assumption}
\newtheorem{lemma}{Lemma}
\newtheorem{remark}{Remark}
\newtheorem{proposition}{Proposition}
\newtheorem{example}{Example}
\newtheorem{definition}{Definition}
\newcommand{\bm}[1]{\mbox{\boldmath$ #1 $\unboldmath}}
\def\P{{\mathrm{P}}}\def\v{\mathbf}\def\m{\boldsymbol}
\def\vb{\bm{\upbeta}}\def\ve{\bm{\upvarepsilon}}\def\vp{\bm{\upphi}}
\def\vt{\bm{\uptheta}}\def\va{\bm{\upalpha}}
\begin{document}

\begin{center}
    {\Large\bf OEM for least squares problems}
\\[2mm] Shifeng \textsc{Xiong}$^1$, Bin \textsc{Dai}$^{2}$ and Peter Z. G.
\textsc{Qian}\footnote[3]{Corresponding author: Peter Z. G. Qian. Email: peterq@stat.wisc.edu}
\\$^1$ Academy of Mathematics and Systems Science \\Chinese Academy of Sciences, Beijing 100190
\\$^2$ Tower Research Capital, 377 Broadway, New York, NY 10013
\\$^3$ Department of Statistics\\ University of Wisconsin-Madison, Madison, WI 53706
\end{center}

\vspace{3mm}

{\small\noindent{\bf Abstract}\quad We propose an algorithm, called OEM (a.k.a. orthogonalizing EM), intended for various least squares problems. The first step, named active orthogonization, orthogonalizes an arbitrary regression matrix by elaborately adding more rows. The second step imputes the responses of the new rows. The third step solves the least squares problem of interest for the complete orthogonal design. The second and third steps have simple closed forms, and iterate until convergence. The algorithm works for ordinary least squares and regularized least squares with the lasso, SCAD, MCP and other penalties. It has several attractive theoretical properties. For the ordinary least squares with a singular regression matrix, an OEM sequence converges to the Moore-Penrose generalized inverse-based least squares estimator. For the SCAD and MCP, an OEM sequence can achieve the oracle property after sufficient iterations for a fixed or diverging number of variables. For ordinary and regularized least squares with various penalties, an OEM sequence converges to a point having grouping coherence for fully aliased regression matrices. Convergence and convergence rate of the algorithm are examined. These convergence rate results show that for the same data set, OEM converges faster for regularized least squares than ordinary least squares. This provides a new theoretical comparison between these methods. Numerical examples are provided to illustrate the proposed algorithm.}

\vspace{1mm}

\noindent KEY WORDS: Design of experiments; MCP; Missing data; Optimization; Oracle property; Orthogonal design; SCAD; The Lasso.

\newpage
\section{INTRODUCTION}\label{sec:intro}
\hskip\parindent\vspace{-0.6cm}

Consider a regression model \begin{equation}\label{lm} \v{y}=\m{X}\vb+\ve, \end{equation}where $\m{X}=(x_{ij})$ is the $n\times p$ regression matrix,
$\v{y}\in{\mathbb{R}}^n$ is the response vector, $\vb=(\beta_1,\ldots,\beta_p)'$ is the vector of regression coefficients, and $\ve$ is the vector of random errors with
zero mean. The ordinary least squares (OLS) estimator of $\vb$ is the solution to\begin{equation}\label{ols}
\min_{{\scriptsize\vb}}\|\v{y}-\m{X}\vb\|^2,\end{equation}where $\|\cdot\|$ denotes the Euclidean norm. If $\m{X}$ is a part of a \emph{known} $m\times p$ orthogonal
matrix\begin{equation}\label{Xc} \m{X}_{\mathrm{c}}=\left(\begin{array}{c}\m{X}\\\m{\Delta}\end{array}\right),\end{equation}where $\m{\Delta}$ is an $(m-n)\times p$
matrix, \eqref{ols} can be efficiently computed by the Healy-Westmacott procedure (Healy and Westmacott 1956). Let\begin{equation}\label{Yc} \v{y}_{\mathrm{c}}=(\v{y}',\
\v{y}_{\mathrm{miss}}')'
\end{equation}be the vector of complete responses with missing data $\v{y}_{\mathrm{miss}}$ of $m-n$ points.
In each iteration, the procedure imputes the value of $\v{y}_{\mathrm{miss}}$, and updates the OLS estimator for the complete data $(\m{X}_{\mathrm{c}},
\v{y}_{\mathrm{c}})$. This update involves no matrix inversion since $\m{X}_{\mathrm{c}}$ is (column) orthogonal. Dempster, Laird, and Rubin (1977)
showed that this procedure is an EM algorithm.

The major limitation of the procedure is the assumption that $\m{X}$ must be embedded in a \emph{pre-specified} orthogonal matrix $\m{X}_{\mathrm{c}}$. We propose a new algorithm, called \emph{orthogonalizing EM} (OEM) algorithm, to remove this restriction and extend to other directions. The first step, called active
orthogonization, orthogonalizes an arbitrary regression matrix by elaborately adding more rows. The second step imputes the responses of the new rows. The third step solves the OLS problem in \eqref{ols} for the complete orthogonal design. The second and third steps have simple closed forms, and iterate until convergence.

For the OLS problem in (\ref{ols}), OEM works with an arbitrary regression matrix $\m{X}$. For $\m{X}$ with no full column rank, the OLS estimator is not unique, and we prove that the OEM
algorithm converges to the Moore-Penrose generalized inverse-based least squares estimator. OEM outperforms
existing methods for such an inverse.

OEM also works for \emph{regularized} least squares problems by adding penalties or constraints to $\vb$ in (\ref{ols}). These penalties include the ridge regression
(Hoerl and Kennard 1970), the nonnegative garrote (Breiman 1995), the lasso (Tibshirani 1996), the SCAD (Fan and Li 2001), and the MCP (Zhang 2010), among others. Here, the first step of OEM uses the same active orthogonalization as that for OLS. The second and third steps of OEM imputes the missing data and solves the regularized problem
for the complete data $(\m{X}_{\mathrm{c}},\v{y}_{\mathrm{c}})$. Both the second and third steps have a simple closed form. We prove that OEM converges to a local minimum or stable point of the regularized least squares problem under mild conditions. Convergence rate of OEM is also established. These convergence rate results show that for the same data, OEM converges faster for regularized least squares than ordinary least squares. This difference provides a new theoretical comparison between these methods. Compared with existing algorithms, OEM possesses two unique theoretical features. 1. \emph{Achieving the oracle property for nonconvex penalties}: An estimator of $\vb$ in (\ref{lm}) having the oracle property can not only select the correct submodel asymptotically, but also estimate the nonzero coefficients as efficiently as if the correct submodel were known in advance. Fan and Li (2001) proved that there exists a local solution of SCAD with this property. From the optimization viewpoint, the SCAD problem can have
\emph{many} local minima (Huo and Chen 2010) and it is not clear which one has this property. Zou and Li (2008) proposed the local linear approximation
(LLA) algorithm to solve the SCAD problem and showed that the one-step LLA estimator has the oracle property with a good initial estimator for a fixed $p$. The LLA estimator is not guaranteed to be a local minimum of SCAD. To the best of our knowledge, no theoretical results so far show that any existing algorithm can provide such a local minimum. We prove that the OEM solution for SCAD can achieve a local solution with this property. 2. \emph{Having
grouping coherence}: An estimator of $\vb$ is said to have grouping coherence if it has the same coefficient for full aliased columns in $\m{X}$. For the lasso, SCAD, and MCP, an OEM sequence converges to a point having grouping coherence, which implies that the full aliased variables will be in or out of the selected model together. This property cannot be achieved by existing algorithms including the coordinate descent algorithm. In terms of numerical performance, OEM can be very fast for ordinary least squares problems and SCAD for big tall data with $n > p$. For big wide data with $p > n$, OEM can be slow. This drawback can be mitigated by adopting a two-stage procedure like that in Fan and Lv (2008), where the first stage uses a screening approach to reduce the dimensionality to a moderate size, and the second stage uses OEM.

The remainder of the article will unfold as follows. Section~\ref{sec:arc} discusses the active orthogonalization procedure. Section \ref{sec:gi} presents OEM for OLS. Section \ref{sec:rls} extends OEM to regularized least squares. Section~\ref{sec:seqconv} provides convergence properties of OEM.
Section~\ref{sec:fulali} shows that for a regression matrix with full aliased columns, an OEM sequence for the lasso, SCAD, or MCP converges to a solution with grouping coherency. Section~\ref{sec:op} establishes the oracle property of the OEM solution for SCAD and MCP. Section \ref{sec:nc} presents numerical examples to compare OEM with other algorithms for regularized least squares. Section \ref{sec:diss} concludes with some discussion.

\section{ACTIVE ORTHOGONALIZATION}\label{sec:arc}
\hskip\parindent\vspace{-0.6cm}

For an arbitrary $n\times p$ matrix $\m{X}$ in \eqref{lm}, we propose \emph{active orthogolization} to actively orthogolize an arbitrary matrix by elaborately adding more rows. Let $\m{S}$ be a $p\times p$ diagonal matrix with non-zero diagonal elements $s_1,\ldots,s_p$. Define
\begin{equation}\label{zmat}
\m{Z}=\m{X}\m{S}^{-1}.
\end{equation}
Consider the eigenvalue decomposition $\m{V}'\m{\Gamma}\m{V}$ of $\m{Z}'\m{Z}$ (Wilkinson 1965), where $\m{V}$ is an orthogonal matrix and $\m{\Gamma}$ is a diagonal
matrix whose diagonal elements, $\gamma_1\geqslant\cdots\geqslant\gamma_p$, are the nonnegative eigenvalues of $\m{Z}'\m{Z}$. For $d\geqslant\gamma_1$, let
\begin{equation}\label{t}
t=\#\{j:\ \gamma_j=d,\ j=1,\ldots,p\}
\end{equation}
denote the number of the $\gamma_j$ equal $d$. For example, if $d=\gamma_1=\gamma_2$ and $\gamma_1>\gamma_j$ for $j=3,\ldots,p$, then $t=2$. If $d>\gamma_1$,
then $t=0$.
Define
\begin{equation}
\m{B}={\mathrm{diag}}(d -\gamma_{t+1},\ldots,d-\gamma_p)\label{B}
\end{equation} and
\begin{equation}\label{orth}
\m{\Delta}=\m{B}^{1/2}\m{V}_1\m{S},
\end{equation} where
$\m{V}_1$ is the submatrix of $\m{V}$ consisting of the last $p-t$ rows. Put $\m{X}$ and $\m{\Delta}$ row by row together to form a complete matrix $\m{X}_{\mathrm{c}}$.

\begin{lemma}\label{lemma}
The matrix $\m{X}_{\mathrm{c}}$ above is column orthogonal.
\end{lemma}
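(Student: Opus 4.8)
The plan is to show directly that $\m{X}_c'\m{X}_c$ is a diagonal matrix, which is precisely the statement that the columns of $\m{X}_c$ are orthogonal. Since $\m{X}_c=(\m{X}'\ \m{\Delta}')'$, we have $\m{X}_c'\m{X}_c=\m{X}'\m{X}+\m{\Delta}'\m{\Delta}$, so the whole problem reduces to expressing these two Gram matrices in terms of the eigendecomposition $\m{Z}'\m{Z}=\m{V}'\m{\Gamma}\m{V}$ and showing that their sum is diagonal.

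First I would substitute $\m{X}=\m{Z}\m{S}$ from (\ref{zmat}), using that $\m{S}$ is diagonal (hence symmetric), to get $\m{X}'\m{X}=\m{S}\m{Z}'\m{Z}\m{S}=\m{S}\m{V}'\m{\Gamma}\m{V}\m{S}$. Similarly, from (\ref{orth}) and symmetry of $\m{B}^{1/2}$, one finds $\m{\Delta}'\m{\Delta}=\m{S}\m{V}_1'\m{B}\m{V}_1\m{S}$. Factoring the diagonal matrix $\m{S}$ out on both sides, it then suffices to prove that the inner matrix $\m{V}'\m{\Gamma}\m{V}+\m{V}_1'\m{B}\m{V}_1$ is diagonal, since $\m{X}_c'\m{X}_c=\m{S}(\cdots)\m{S}$ inherits diagonality from it.

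The key step is a decomposition of $\m{\Gamma}$. Writing $t$ for the multiplicity of $\gamma_1$ as in (\ref{t}), I would set $\gamma_1\m{I}_p-\m{\Gamma}=\mathrm{diag}(0,\ldots,0,\gamma_1-\gamma_{t+1},\ldots,\gamma_1-\gamma_p)$, whose leading $t\times t$ block vanishes and whose trailing $(p-t)\times(p-t)$ block is exactly $\m{B}$ from (\ref{B}). Partitioning $\m{V}=(\m{V}_0'\ \m{V}_1')'$ with $\m{V}_0$ its first $t$ rows and $\m{V}_1$ its last $p-t$ rows as in (\ref{orth}), this block structure annihilates the $\m{V}_0$ contribution and yields $\m{V}'(\gamma_1\m{I}_p-\m{\Gamma})\m{V}=\m{V}_1'\m{B}\m{V}_1$. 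Invoking orthogonality of $\m{V}$, i.e. $\m{V}'\m{V}=\m{I}_p$, this rearranges to $\m{V}'\m{\Gamma}\m{V}=\gamma_1\m{I}_p-\m{V}_1'\m{B}\m{V}_1$, so the two terms cancel and $\m{V}'\m{\Gamma}\m{V}+\m{V}_1'\m{B}\m{V}_1=\gamma_1\m{I}_p$. Consequently $\m{X}_c'\m{X}_c=\gamma_1\m{S}^2$, which is diagonal, and the argument is complete.

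The computation is short, so the only real subtlety — and the step I would check most carefully — is aligning the indices in the block decomposition: one must confirm that the number of rows $\m{V}_1$ retains, namely $p-t$, matches the size of $\m{B}$, and that $\m{B}$ is exactly the nonzero trailing block of $\gamma_1\m{I}_p-\m{\Gamma}$, so that the cross terms involving $\m{V}_0$ genuinely drop out. Once the partition is aligned, orthogonality of $\m{V}$ does all the remaining work and the identity $\m{X}_c'\m{X}_c=\gamma_1\m{S}^2$ follows with no case analysis.
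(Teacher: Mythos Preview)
Your proof is correct and follows essentially the same approach as the paper: both compute $\m{X}_c'\m{X}_c=\m{X}'\m{X}+\m{\Delta}'\m{\Delta}$, express each piece via the eigendecomposition $\m{Z}'\m{Z}=\m{V}'\m{\Gamma}\m{V}$, use the block structure $\gamma_1\m{I}_p-\m{\Gamma}=\mathrm{diag}(\m{0},\m{B})$ together with orthogonality of $\m{V}$, and conclude $\m{X}_c'\m{X}_c=\gamma_1\m{S}^2$. Your write-up is simply a more explicit version of the paper's argument.
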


\begin{proof}
From (\ref{B}) and (\ref{orth}),
\begin{eqnarray*}
\m{X}_{\mathrm{c}}' \m{X}_{\mathrm{c}}=\m{X}' \m{X}+\m{\Delta}'\m{\Delta}=\m{S}(\m{V}'\m{\Gamma}\m{V}+\m{V}'_1\m{B}\m{V}_1)\m{S}.
\end{eqnarray*}For the $p\times p$ identity matrix $\m{I}_p$,$$d\m{I}_p-\m{\Gamma}=\left(\begin{array}{cc}\m{0}&\m{0}\\\m{0}&\m{B}\end{array}\right)$$
It then follows that $\m{X}_{\mathrm{c}}' \m{X}_{\mathrm{c}}=\m{S}[\m{V}'\m{\Gamma}\m{V}+\m{V}'(d\m{I}_p-\m{\Gamma})\m{V}]\m{S}=d\m{S}^2$, which completes the proof.
\end{proof}

Here is the underlying geometry of active orthogolization. For a vector $\v{x}\in{\mathbb{R}}^m$, let $P_\omega\v{x}$ denote its projection onto a subspace $\omega$ of
${\mathbb{R}}^m$. Lemma \ref{lemma} implies that for the column vectors $\v{x}_{1},\ldots,\v{x}_{p}\in{\mathbb{R}}^n$ of $\m{X}$ in (\ref{lm}), there exists a
set of mutually orthogonal vectors $\v{x}_{c1},\ldots,\v{x}_{cp}\in{\mathbb{R}}^{n+p-t}$ of $\m{X}_{\mathrm{c}}$ in (\ref{Xc}) satisfying
$P_{{\mathbb{R}}^n}\v{x}_{ci}=\v{x}_i$, for $j=1,\ldots,p$. Proposition 1 makes this precise.

\begin{proposition}
Let $\omega$ be an $n$-dimensional subspace of ${\mathbb{R}}^m$ with $n\leqslant m$. If $p\leqslant m-n+1$, then for any $p$ vectors $\v{x}_1,\ldots,\v{x}_p\in\omega$,
there exist $p$ vectors $\v{x}_{\mathrm{c}1},\ldots,\v{x}_{\mathrm{c}p}\in {\mathbb{R}}^m$ such that $P_\omega\v{x}_{\mathrm{c}i}=\v{x}_i$ for $j=1,\ldots,p$ and
$\v{x}_{\mathrm{c}i}'\v{x}_{\mathrm{c}j}=0$ for $i\neq j$.
\end{proposition}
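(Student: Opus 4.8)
The plan is to lift each $\v{x}_i$ by a vector orthogonal to $\omega$ and to reduce the orthogonality requirement to the existence of a suitable Gram matrix. Write $\mathbb{R}^m=\omega\oplus\omega^\perp$ and set $d=\dim\omega^\perp=m-n$, so that the hypothesis $p\leqslant m-n+1$ reads $d\geqslant p-1$. I would seek the $\v{x}_{ci}$ in the form $\v{x}_{ci}=\v{x}_i+\v{z}_i$ with $\v{z}_i\in\omega^\perp$. The projection condition $P_\omega\v{x}_{ci}=\v{x}_i$ is then automatic, and since every $\v{x}_i\in\omega$ is orthogonal to every $\v{z}_j\in\omega^\perp$, for $i\neq j$ one gets $\v{x}_{ci}'\v{x}_{cj}=\v{x}_i'\v{x}_j+\v{z}_i'\v{z}_j$. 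Hence the whole problem collapses to finding vectors $\v{z}_1,\ldots,\v{z}_p$ in the $d$-dimensional space $\omega^\perp$ whose pairwise inner products satisfy $\v{z}_i'\v{z}_j=-\v{x}_i'\v{x}_j$ for all $i\neq j$, with the diagonal quantities $\|\v{z}_i\|^2$ left free.

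This is a Gram-matrix realization problem. Let $\m{M}$ be the symmetric $p\times p$ matrix with entries $-\v{x}_i'\v{x}_j$ off the diagonal and zeros on the diagonal, and consider candidate Gram matrices $\m{G}=\m{M}+\m{D}$ with $\m{D}=\mathrm{diag}(\delta_1,\ldots,\delta_p)$ and $\delta_i\geqslant0$. Any such $\m{G}$ already carries the prescribed off-diagonal entries, so it remains only to choose the diagonal making $\m{G}$ positive semidefinite of rank at most $d$. A positive semidefinite matrix of rank $r\leqslant d$ factors as $\m{G}=\m{Z}'\m{Z}$ with $\m{Z}$ a $d\times p$ matrix (from the spectral decomposition, retaining the nonzero eigenvalues and padding with $d-r$ zero rows); the columns of $\m{Z}$, transported into $\omega^\perp$ through any fixed isometry $\omega^\perp\cong\mathbb{R}^d$, are then the desired $\v{z}_i$, and their norms $\delta_i$ are nonnegative as required.

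The only real work is producing such a diagonal in the tight case $d=p-1$, where a strictly diagonally dominant (hence full-rank $p$) choice would be too large. Here I would exploit that $\mathrm{tr}\,\m{M}=0$: letting $\mu_{\min}$ denote the smallest eigenvalue of $\m{M}$, the vanishing trace forces $\mu_{\min}\leqslant0$, so I may take $\m{D}=-\mu_{\min}\m{I}_p$, i.e. all $\delta_i=-\mu_{\min}\geqslant0$. Then $\m{G}=\m{M}-\mu_{\min}\m{I}_p$ has all eigenvalues shifted to be nonnegative with at least one equal to zero, whence $\m{G}\succeq0$ and $\mathrm{rank}(\m{G})\leqslant p-1\leqslant d$, exactly what the factorization step needs. (If $\m{M}=\m{0}$ the $\v{x}_i$ are already mutually orthogonal and one simply takes $\v{x}_{ci}=\v{x}_i$.) Finally I would substitute back and check directly that $P_\omega\v{x}_{ci}=\v{x}_i$ and that $\v{x}_{ci}'\v{x}_{cj}=\v{x}_i'\v{x}_j+G_{ij}=0$ for $i\neq j$. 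The crux, and the main obstacle, is precisely this rank constraint when $p=m-n+1$: it is resolved not by enlarging norms but by using the trace-zero structure of $\m{M}$ to guarantee a nonnegative uniform shift that yields a singular positive semidefinite Gram matrix of the exactly permitted rank.
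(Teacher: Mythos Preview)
Your proof is correct. The paper does not spell out a separate proof of this proposition; it simply records it as a restatement of Lemma~1: taking $\m{S}=\m{I}_p$ there gives $\m{\Delta}'\m{\Delta}=\gamma_1\m{I}_p-\m{X}'\m{X}$, a positive semidefinite matrix of rank $p-t\leqslant p-1$, and the columns of $\m{\Delta}$ (placed in $\omega^\perp$) play exactly the role of your $\v{z}_i$. Your argument is the same mechanism recast as a Gram-matrix realization: both proofs reduce to exhibiting a positive semidefinite $\m{G}$ of rank at most $p-1$ with off-diagonal entries $-\v{x}_i'\v{x}_j$, and both obtain it by an eigenvalue shift. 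The paper shifts $-\m{X}'\m{X}$ up by $\gamma_1$, while you shift $\m{M}=\mathrm{diag}(\m{X}'\m{X})-\m{X}'\m{X}$ up by $-\mu_{\min}$ and use $\mathrm{tr}\,\m{M}=0$ to certify $\mu_{\min}\leqslant 0$; the two choices coincide when the $\v{x}_i$ have equal norms and differ only in the resulting diagonal otherwise. Your self-contained presentation has the minor advantage of making the role of the hypothesis $p\leqslant m-n+1$ (equivalently $\mathrm{rank}(\m{G})\leqslant p-1$) fully explicit.
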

\noindent For illustration, Figure \ref{fig:geo} expands two vectors $\v{x}_1$ and $\v{x}_2$ in $\mathbb{R}^2$ to two orthogonal vectors $\v{x}_{c1}$ and $\v{x}_{c2}$ in
$\mathbb{R}^3$.

\begin{figure}[t]\begin{center}\scalebox{0.8}[0.8]{\includegraphics{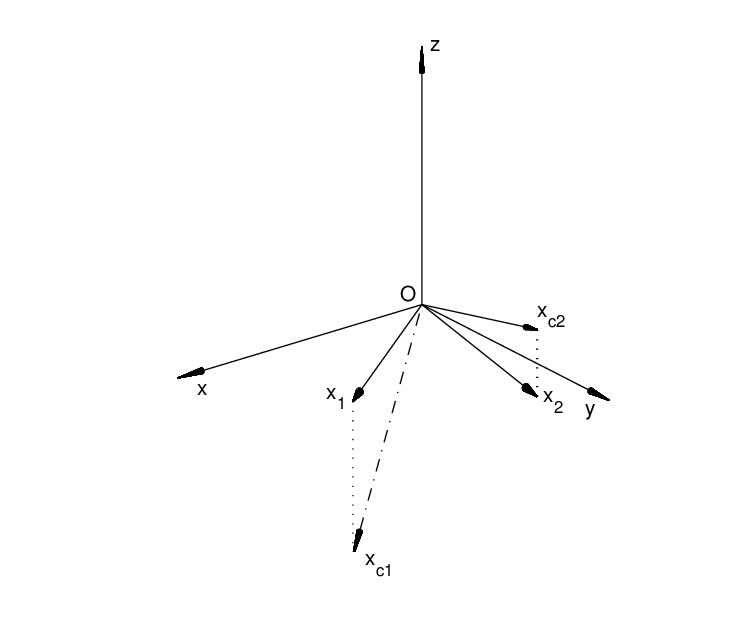}}\end{center}
\caption{Expand two two-dimensional vectors $\v{x}_1$ and $\v{x}_2$ to two three-dimensional vectors $\v{x}_{\mathrm{c}1}$ and $\v{x}_{\mathrm{c}2}$ with
$\v{x}_{\mathrm{c}1}'\v{x}_{\mathrm{c}2}=0$.}\label{fig:geo}
\end{figure}

\begin{remark}\label{rm:n}
In (\ref{orth}) $\m{\Delta}$ has $p-t$ rows, which does not rely on the number of rows in $\m{X}$, and only $p-t$ rows need to be added to make
it orthogonal.
\end{remark}

\begin{remark}\label{rm:1}
The form of $\m{S}$ in (\ref{zmat}) can be chosen flexibly. One possibility is $\m{S}=\m{I}_p$ with
\begin{equation}\m{X}'\m{X}+\m{\Delta}'\m{\Delta}=d\m{I}_p\label{XcdI}\end{equation}with $d\geqslant\gamma_1$, and
$\m{X}_{\mathrm{c}}$ is standardized as the Euclidean norm of each column is $d$.
\end{remark}

\begin{example} Suppose that $\m{X}$ in (\ref{lm}) is orthogonal. Take $d=\gamma_1$ and
\begin{equation}
\m{S}={\mathrm{diag}}\left[(\sum_{i=1}^nx_{i1}^2)^{1/2},\ldots,(\sum_{i=1}^nx_{ip}^2)^{1/2}\right].\label{SS}
\end{equation}
Since $t=p$, $\m{\Delta}$ in (\ref{orth}) is empty, which indicates that active orthogonalization will not overshoot.
\end{example}

\begin{example} Let $$\m{X}=\left(\begin{array}{rrrrr}0&\ &0&\ &3/2\\-4/3&&-2/3&&1/6\\2/3&&4/3&&1/6\\-2/3&&2/3&&-7/6\end{array}\right).$$
If $\m{S}=\m{I}_3$ and $d=\gamma_1$, (\ref{orth}) gives $\m{\Delta}=(-2/\sqrt{3},\ 2/\sqrt{3},\ 1/\sqrt{3}).$
\end{example}

\begin{example}\label{ex:design}
Consider a two-level design in three factors
$$\left(\begin{array}{rrr}-1&\ -1&\ -1\\-1&\ 1&\ 1\\1&\ -1&\ 1\\1&\ 1&\ -1\end{array}\right).$$
The regression matrix including all main effects and two-way interactions is
$$\m{X}=\left(\begin{array}{rrrrrr}-1&\ -1&\ -1&\ 1&\ 1&\ 1\\-1&\ 1&\ 1&-1&-1&1\\1&\ -1&\ 1&\ -1&\ 1&\ -1\\1&\ 1&\ -1&\ 1&\ -1&\ -1\end{array}\right),$$
where the last three columns for the interactions are fully aliased with the first three columns for the main effects. For $\m{S}=\m{I}_3$ and $d=\gamma_1$, (\ref{orth})
gives
$$\m{\Delta}=\left(\begin{array}{rrrrrr}0&\ -2&\ 0&\ 0&\ -2&\ 0\\0&\ 0&\ -2&\ -2&\ 0&\ 0\\-2&\ 0&\ 0&\ 0&\ 0&\ -2\end{array}\right).$$
The structure of $\m{\Delta}$ is flexible in that the interaction columns do not need to be a product of other two columns.

\end{example}

\begin{example} Consider a $1000\times 10$ random matrix $\m{X}=(x_{ij})$ with entries independently drawn  from
the uniform distribution on $[0,1)$. Using $\m{S}$ in (\ref{SS}), (\ref{orth}) gives
$$\m{\Delta}=\left(\begin{array}{rrrrrrrrrr}
-7.99&16.06&-6.39&-18.26&12.91&-8.67&7.56&34.08&-17.04&-11.81\\
26.83&-12.09&7.91&1.02&-22.75&-6.90&-19.98&26.10&-0.86&0.88\\
-4.01&1.48&9.51&-21.99&19.46&-10.27&-25.12&-3.39&7.29&27.90\\
21.77&10.72&-0.61&-6.46&28.00&1.28&-6.86&-7.04&11.13&-30.64\\
-15.78&5.60&-15.26&-7.67&-9.76&23.93&-14.71&12.25&29.45&-7.89\\
16.34&10.61&-41.82&11.82&6.49&-7.38&-6.14&-1.82&-1.86&13.09\\
-8.15&24.97&12.11&24.35&3.66&-2.59&-27.84&-3.45&-9.40&-13.72\\
-5.35&-21.70&-4.16&7.42&13.98&29.84&-10.26&7.60&-25.13&7.78\\
-19.62&-22.43&-2.61&22.58&11.80&-22.08&1.25&15.87&14.94&0.31\\
\end{array}\right).$$ Only nine rows need to be added to make this large $\m{X}$ matrix orthogonal. 
\end{example}







\section{OEM FOR ORDINARY LEAST SQUARES}\label{sec:gi}
\hskip\parindent\vspace{-0.6cm}

We now study OEM for the OLS problem in (\ref{ols}) when the regression matrix $\m{X}$ has an arbitrary form.
The first step of OEM is active orthogonalization to obtain $\m{\Delta}$ in \eqref{orth}. For an initial estimator $\vb^{(0)}$, the second step imputes
$\v{y}_{\mathrm{miss}}$ in (\ref{Yc}) by $\v{y}_{\mathrm{I}}=\m{\Delta}\vb^{(0)}$. Let $\v{y}_{\mathrm{c}}=(\v{y}',\ \v{y}_{\mathrm{I}}')'$. The third step solves
\begin{eqnarray}\vb^{(1)}=\arg\!\min_{{\scriptsize\vb}} \|\v{y}_{\mathrm{c}}-\m{X}_{\mathrm{c}}\vb\|^2.\label{iols0}\end{eqnarray}Then, the second and third
steps iterate for obtaining $\vb^{(2)},\vb^{(3)},\ldots$
until convergence. Define
\begin{equation}\label{Ad}
\m{A}=\m{\Delta}'\m{\Delta}.
\end{equation}
For $\m{X}_{\mathrm{c}}$ in \eqref{Xc}, let $(d_1,\ldots,d_p)$ denote the diagonal elements of $\m{X}_{\mathrm{c}}' \m{X}_{\mathrm{c}}$.
For $k=0,1,\ldots,$ let
\begin{equation}\label{vu}
\v{u}=(u_1,\ldots,u_p)'=\m{X}' \v{y}+\m{A}\vb^{(k)},
\end{equation}and
(\ref{iols0}) becomes
\begin{equation}\label{iols}
\vb^{(k+1)}=\arg\!\min_{{\scriptsize\vb}}\sum_{j=1}^p(d_j\beta_j^2-2u_j\beta_j),
\end{equation}
which is \emph{separable} in the dimensions of $\vb$. Thus, (\ref{iols}) has a simple form
\begin{equation}\label{olss}
\beta_j^{(k+1)}=u_j/d_j,\quad\text{for}\ j=1,\ldots,p,\end{equation} which involves no matrix inversion.

In (\ref{vu}), for active orthogonalization, instead of computing $\m{\Delta}$ in (\ref{orth}), one can compute $\m{A}=\m{\Delta}' \m{\Delta}$ and the diagonal entries
$d_1,\ldots,d_p$ of $\m{X}_{c}' \m{X}_{\mathrm{c}}$. If $\m{S}=\m{I}_p$ in (\ref{orth}), $\m{A}=d\m{I}_p-\m{X}' \m{X}$, where $d$ is a number no less than
the largest eigenvalue $\gamma_1$ of $\m{X}'\m{X}$. A possible choice is $d={\mathrm{trace}}(\m{X}'\m{X})$. Another choice is $d=\gamma_1$ to obtain the fastest
convergence; see Remark \ref{rm:r}. We compute $\gamma_1$ by the power method (Wilkinson 1965) described below. Given a nonzero initial vector
$\v{a}^{(0)}\in{\mathbb{R}}^p$, let $\gamma_1^{(0)}=\|\v{a}^{(0)}\|$. For $k=0,1,...$, compute
$\v{a}^{(k+1)}=\m{X}'\m{X}\v{a}^{(k)}/\gamma_1^{(k)}$ and $\gamma_1^{(k+1)}=\|\v{a}^{(k+1)}\|$
until convergence. If $\v{a}^{(0)}$ is not an eigenvector of any $\gamma_j$ unequal to $\gamma_1$, then $\gamma_1^{(k)}$ converges to $\gamma_1$.
For $t$ in (\ref{t}), the convergence rate of the power method is linear (Watkins 2002) specified by
$$\lim_{k\rightarrow\infty}\frac{\|\gamma_1^{(k+1)}-\gamma_{1}\|}{\|\gamma_1^{(k)}-\gamma_{1}\|}=\frac{\gamma_{t+1}}{\gamma_1}.$$When $p>n$,
replace the $p\times p$ matrix $\m{X}'\m{X}$ with the $n\times n$ matrix $\m{X}\m{X}'$ in the power method to reduce computational cost as the two matrices have the same non-zero eigenvalues.

When $\m{X}$ has full column rank, the convergence results in Wu (1983) indicates that the OEM sequence given by (\ref{olss}) converges to the OLS
estimator for any initial point $\vb^{(0)}$. Next, we discuss the convergence property of OEM when $\m{X}'\m{X}$ is singular, which covers the case
of $p>n$. Let $r$ denote the rank of $\m{X}$. For $r<p$, the singular value decomposition (Wilkinson 1965) of $\m{X}$ is
$$\m{X}=\m{U}'\left(\begin{array}{cc}\m{\Gamma}_0^{1/2}&\m{0}\\\m{0}&\m{0}\end{array}\right)\m{V},$$where $\m{U}$ is an $n\times n$ orthogonal matrix, $\m{V}$ is a $p\times
p$ orthogonal matrix, and $\m{\Gamma}_0$ is a diagonal matrix with diagonal elements $\gamma_1\geqslant\cdots\geqslant\gamma_r$ which are the positive eigenvalues of
$\m{X}' \m{X}$. Define\begin{equation}\hat{\vb}^*=(\m{X}'\m{X})^+\m{X}'\v{y},\label{vb}\end{equation} where $^+$ denotes the Moore-Penrose generalized inverse
(Ben-Israel and Greville 2003).

\begin{theorem}Suppose that $\m{X}' \m{X}+\m{\Delta}'\m{\Delta}=\gamma_1\m{I}_p$. If $\vb^{(0)}$ lies in the linear space spanned by the first $r$ columns
of $\m{V}'$, then as $k\rightarrow\infty$, for the OEM sequence $\{\vb^{(k)}\}$ of the ordinary least squares,
$\vb^{(k)}\rightarrow\hat{\vb}^*$.\label{th:olsconve}\end{theorem}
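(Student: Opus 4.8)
The plan is to exploit the fact that, with no penalty, the OEM update is an \emph{affine} map and to diagonalize it through the singular value decomposition of $\m{X}$. First I would specialize the separable update in (\ref{cw}) to the case $P\equiv0$, $\m{S}=\m{I}_p$, and $d=\gamma_1$: since then $d_j=\gamma_1$ and $\beta_j^{(k+1)}=u_j/\gamma_1$, the vector form of the recursion is $\vb^{(k+1)}=\gamma_1^{-1}\bigl(\m{X}'\v{Y}+\m{A}\vb^{(k)}\bigr)$ with $\m{A}=\m{\Delta}'\m{\Delta}=\gamma_1\m{I}_p-\m{X}'\m{X}$. Writing $\m{T}=\m{I}_p-\gamma_1^{-1}\m{X}'\m{X}$ and $\v{c}=\gamma_1^{-1}\m{X}'\v{Y}$, this becomes the linear iteration $\vb^{(k+1)}=\m{T}\vb^{(k)}+\v{c}$.

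Next I would diagonalize. From the singular value decomposition of $\m{X}$ one has $\m{X}'\m{X}=\m{V}'\,\mathrm{diag}(\gamma_1,\dots,\gamma_r,0,\dots,0)\,\m{V}$, so the change of variables $\v{w}^{(k)}=\m{V}\vb^{(k)}$ decouples the iteration into the scalar recursions
\begin{equation*}
w_j^{(k+1)}=\Bigl(1-\tfrac{\gamma_j}{\gamma_1}\Bigr)w_j^{(k)}+(\m{V}\v{c})_j,\qquad j=1,\dots,p.
\end{equation*}
For $j\leqslant r$ the contraction factor $1-\gamma_j/\gamma_1$ lies in $[0,1)$, so $w_j^{(k)}$ converges geometrically to the fixed point $(\m{V}\v{c})_j\gamma_1/\gamma_j=(\m{V}\m{X}'\v{Y})_j/\gamma_j$. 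For $j>r$ the factor equals $1$, so the recursion is non-contracting; this is exactly where the hypothesis on $\vb^{(0)}$ must do its work.

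The crux is the null-space directions $j>r$. Using the singular value decomposition of $\m{X}'$ I would show that $\m{V}\m{X}'\v{Y}$, and hence $\m{V}\v{c}$, has zero entries in all coordinates $j>r$, so that $w_j^{(k+1)}=w_j^{(k)}$ for those $j$: the forcing term never excites the null space. The assumption that $\vb^{(0)}$ lies in the span of the first $r$ columns of $\m{V}'$ is precisely the statement $w_j^{(0)}=0$ for $j>r$, and combined with the previous observation it forces $w_j^{(k)}=0$ for all $k$ and all $j>r$. Assembling the componentwise limits gives $\v{w}^*$ with $w_j^*=(\m{V}\m{X}'\v{Y})_j/\gamma_j$ for $j\leqslant r$ and $w_j^*=0$ otherwise; transforming back and comparing with $(\m{X}'\m{X})^+=\m{V}'\,\mathrm{diag}(\gamma_1^{-1},\dots,\gamma_r^{-1},0,\dots,0)\,\m{V}$ identifies the limit as $\vb^*=\m{V}'\v{w}^*=(\m{X}'\m{X})^+\m{X}'\v{Y}=\hat{\vb}^*$. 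The main obstacle is conceptual rather than computational: one must recognize that the eigenvalue-$1$ subspace of $\m{T}$ coincides with the null space of $\m{X}'\m{X}$, that the forcing $\v{c}$ has no component there, and that the initial-value restriction pins those coordinates to zero --- this is exactly what singles out the minimum-norm (Moore--Penrose) least squares solution among the infinitely many ordinary least squares estimators.
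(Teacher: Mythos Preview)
Your proof is correct and follows essentially the same approach as the paper: both derive the affine recursion $\vb^{(k+1)}=\gamma_1^{-1}\m{X}'\v{Y}+(\m{I}_p-\gamma_1^{-1}\m{X}'\m{X})\vb^{(k)}$, diagonalize via the SVD of $\m{X}$, and use the hypothesis on $\vb^{(0)}$ to kill the null-space components while the range components contract to the Moore--Penrose solution. The only cosmetic difference is that the paper unrolls the recursion into an explicit geometric series $\vb^{(k)}=\gamma_1^{-1}(\m{I}_p+\m{D}+\cdots+\m{D}^{k-1})\m{X}'\v{Y}+\m{D}^k\vb^{(0)}$ and takes the limit of each piece, whereas you pass to eigencoordinates first and analyze the scalar recursions directly; the content is identical.
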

\begin{proof} Define
$\m{D}=\m{I}_p-\gamma_1^{-1}\m{X}'\m{X}$. Note that $\vb^{(k+1)}=\gamma_1^{-1}\m{X}'\v{y}+\m{D}\vb^{(k)}$. By induction,
\begin{eqnarray*}\vb^{(k)}&=&\gamma_1^{-1}(\m{I}_p+\m{D}+\cdots+\m{D}^{k-1})\m{X}'\v{y}+\m{D}^k\vb^{(0)}
\\&=&\gamma_1^{-1}\m{V}'\left\{\m{I}_p+\left(\begin{array}{cc}\m{I}_r-\gamma_1^{-1}\m{\Gamma}_0&\m{0}\\\m{0}&-\m{I}_{p-r}\end{array}\right)+\cdots
+\left(\begin{array}{cc}(\m{I}_r-\gamma_1^{-1}\m{\Gamma}_0)^{k-1}&\m{0}\\\m{0}&(-1)^{k-1}\m{I}_{p-r}\end{array}\right)\right\}\\&&\ \cdot\m{V}
\m{V}'\left(\begin{array}{cc}\m{\Gamma}_0^{1/2}&\m{0}\\\m{0}&\m{0}\end{array}\right)\m{U}\v{y}+\m{D}^k\vb^{(0)}
\\&=&\gamma_1^{-1}\m{V}'\left(\begin{array}{cc}\big\{\m{I}_r+(\m{I}_r-\gamma_1^{-1}\m{\Gamma}_0)+\cdots+(\m{I}_r-\gamma_1^{-1}\m{\Gamma}_0)^{k-1}\big\}
\m{\Gamma}_0^{1/2}&\m{0}\\\m{0}&\m{0}
\end{array}\right)\m{U}\v{y}+\m{D}^k\vb^{(0)}.\end{eqnarray*} As $k\rightarrow\infty$,
$$\m{D}^k\rightarrow \m{V}'\left(\begin{array}{cc}\m{0}&\\&\m{I}_{p-r}\end{array}\right)\m{V}$$and
$\m{D}^k\vb^{(0)}\rightarrow0$, which implies that $$\vb^{(k)}\rightarrow \m{V}'\left(\begin{array}{cc}\m{\Gamma}_0^{-1/2}&\m{0}\\\m{0}&\m{0}\end{array}\right)
\m{U}\v{y}=\hat{\vb}^*.$$
This completes the proof.\end{proof}

In active orthogonalization, the condition  $\m{X}' \m{X}+\m{\Delta}'\m{\Delta}=\gamma_1\m{I}_p$ holds if $d=\gamma_1$ and $\m{S}=\m{I}_p$ in (\ref{orth}). Using
$\vb^{(0)}=\v{0}$ satisfies the condition in Theorem \ref{th:olsconve}.

The Moore-Penrose generalized inverse is widely used in statistics for a degenerated system. Theorem \ref{th:olsconve} indicates that OEM converges to $\hat{\vb}^*$ in
\eqref{vb} in this case. When $r<p$, the limiting vector $\hat{\vb}^*$ given by an OEM sequence has the following properties. First, it has the minimal Euclidean norm
among the least squares estimators $(\m{X}'\m{X})^-\m{X}'\v{y}$ (Ben-Israel and Greville 2003). Second, its model error has a simple form,
$E\big[(\hat{\vb}^*-\vb)'(\m{X}' \m{X})(\hat{\vb}^*-\vb)\big]=r\sigma^2$. Third, $\m{X}\va=\v{0}$ implies $\va'\hat{\vb}^*=0$ for any vector $\va$. The third property
indicates that $\hat{\vb}^*$ inherits the multicollinearity between the columns in $\m{X}$. This property is stronger than grouping coherence for regularized least squares in
Section \ref{sec:fulali}.

A widely used method for computing $\hat{\vb}^*$ is to obtain $(\m{X}'\m{X})^+$ by the eigenvalue decomposition (Golub and Van Loan 1996) and then compute the
product of $(\m{X}'\m{X})^+$ and $\m{X}'\v{y}$. This method is implemented in the \texttt{MATLAB} function
\texttt{pinv} with core code in \texttt{Fortran} and in the \texttt{R} function \texttt{ginv} with core code in \texttt{C++}. The \texttt{R} function is slower than the \texttt{MATLAB} function. When some eigenvalues are close to zero, the eigenvalue decomposition method is unstable, and OEM is more stable due to its iterative nature. The following example illustrates this difference.

\begin{example} Construct a $10\times4$ matrix $\m{X}=\big({\mathrm{diag}}(1,1,1,\sqrt{u})\ \m{0}\big)'$, where $u$ is generated from a uniform
distribution on $[10^{-16},\ 10^{-14})$. The eigenvalues of $\m{X}'\m{X}$ are $1,1,1$, and $u$. Generate all entries of $\v{y}$ independently from the uniform
distribution on $[0,1)$. We compare OEM and the eigenvalue decomposition method for computing $\hat{\vb}^*$ using the \texttt{MATLAB} function \texttt{pinv}. For OEM, the stopping criterion is
when relative changes in all coefficients are less than $10^{-4}$. The two methods are replicated 100 times in \texttt{MATLAB}. Over the 100 replicates, the largest and
smallest values of $\|\hat{\vb}^*\|$ by the eigenvalue decomposition method are $2.06\times10^{7}$ and $0.25$, indicating unstability. The two values computed by OEM are $1.48$ and $0.25$,
which are much more stable.
\end{example}

Next, we discuss the computational efficiency of OEM for computing $\hat{\vb}^*$ in \eqref{vb} when $\m{X}$ is degenerated. Recall that $\m{X}'\m{X}$ and
$\m{X}\m{X}'$ have the same nonzero eigenvalues. The computation of $\gamma_1$ in the OEM iterations by the power method has complexity
$O\big(\min\{n,p\}^2\max\{n,p\}\big)$. Since the complexity of the OEM iterations is $O(np^2)$, the whole computational complexity of OEM for computing $\hat{\vb}^*$ is
$O(np^2)$. The eigenvalue decomposition method computes $(\m{X}'\m{X})^+$ first by eigenvalue decomposition to obtain $\hat{\vb}^*$, and has computational complexity $O(np^2+p^3)$. The
OEM algorithm is superior to this method in terms of complexity.

\begin{table}[htb]
\begin{center}
\begin{tabular}{cc|ccc}
$n$& $p$ & OEM & eigenvalue \\ & & &decomposition\\
\hline
\hline \multirow{5}{*}{$50,000$}
& $10$ & $0.0433$ & $0.0956$ \\
&$50$ & $0.2439$ & $0.4098$ &\\
&$200$ & $1.4156$ & $4.9765$\\
&$1000$ & $5.4165$ & $45.3270$ \\
&$5,000$ & $72.0630$ & $442.3300$
\end{tabular}
\caption{Average runtime (second) comparison between OEM and the prevailing method for $n > p$} \label{comp3}
\end{center}
\end{table}

\begin{table}[htb]
\begin{center}
\begin{tabular}{cc|cc}
$p$& $n$ & OEM & eigenvalue \\ & & &decomposition\\
\hline
\hline \multirow{5}{*}{$50,000$}
& $10$ & $0.0482$ & $0.1153$ \\
&$50$ & $0.4203$ & $0.4176$ \\
&$200$ & $1.9159$ & $5.2053$\\
&$1000$ & $8.4626$ & $47.7653$ \\
&$5,000$ & $71.8477$ & $440.6741$
\end{tabular}
\caption{Average runtime (second) comparison between OEM and the eigenvalue decomposition method for $p>n$} \label{comp4}
\end{center}
\end{table}

We conduct a simulation study to compare the speeds of OEM and the eigenvalue decomposition method for computing $\hat{\vb}^*$ in \eqref{vb}. Generate all entries of $\m{X}$ and $\v{y}$ independently from the standard normal distribution. A new predictor calculated as the mean of all the covariates is added to degenerate the design matrix. Tables \ref{comp3} and
\ref{comp4} compare our \texttt{R} package \texttt{oem} with main code in \texttt{C++} and the eigenvalue decomposition method in computing $\hat{\vb}^*$.
The two methods give the same results. Tables \ref{comp3} and \ref{comp4} indicate that OEM is faster than the eigenvalue decomposition method for any combination
of $n$ and $p$, validating the above complexity analysis.

\section{OEM FOR REGULARIZED LEAST SQUARES}\label{sec:rls}
\hskip\parindent\vspace{-0.6cm}

It is easy to extend OEM to regularized least squares problems. Consider a penalized version of (\ref{lm}):
\begin{equation}\label{pls}
\min_{{\scriptsize\vb}\in\Theta}\left[ \|\v{y}-\m{X}\vb\|^2+P(\vb;\lambda) \right],
\end{equation} where ${{\vb}\in\Theta}$,
$\Theta$ is a subset of ${\mathbb{R}}^p$, $P$ is a penalty function, and $\lambda$ is the vector of tuning parameters. To apply the penalty $P$ equally to all the
variables, the regression matrix $\m{X}$ is standardized so that
\begin{equation}
\sum_{i=1}^nx_{ij}^2=1, \;  \mbox{for} \; j=1,\ldots,p.\label{stand}
\end{equation} Popular choices for $P$ include the
ridge regression (Hoerl and Kennard 1970), the nonnegative garrote (Breiman 1995), the lasso (Tibshirani 1996), the SCAD (Fan and Li 2001), and the MCP (Zhang 2010).

Suppose that $\Theta$ and $P$ in (\ref{pls}) are \emph{decomposable} as $\Theta=\prod_{j=1}^p\Theta_j$ and $P(\vb;\lambda)=\sum_{j=1}^pP_j(\beta_j;\lambda)$. For the
problem in (\ref{pls}), the first step of OEM is active orthogonalization, which computes $\m{\Delta}$ in \eqref{orth}. For an initial estimator $\vb^{(0)}$, the second
step imputes $\v{y}_{\mathrm{miss}}$ in (\ref{Yc}) by $\v{y}_{\mathrm{I}}=\m{\Delta}\vb^{(0)}$. Let $\v{y}_{\mathrm{c}}=(\v{y}',\ \v{y}_{\mathrm{I}}')'$. The third step
solves
\begin{eqnarray*}\vb^{(1)}=\arg\!\min_{{\scriptsize\vb}\in\Theta}\left[ \|\v{y}_{\mathrm{c}}-\m{X}_{\mathrm{c}}\vb\|^2+P(\vb;\lambda) \right]. \end{eqnarray*}
The second and third steps iterate to compute $\vb^{(k)}$ for $k=1,2,\ldots$ until convergence. Similar to (\ref{iols}), we have an iterative formula
\begin{equation}\label{cw}
\beta_j^{(k+1)}=\arg\!\!\min_{\beta_j\in\Theta_j}\big[d_j\beta_j^2-2u_j\beta_j+P_j(\beta_j;\lambda)\big],\ \text{for}\
j=1,\ldots,p,
\end{equation}
with $\v{u}=(u_1,\ldots,u_p)'$ in (\ref{vu}). This shortcut applies to the following penalties:

\begin{description}

\item[]1. The lasso (Tibshirani 1996), where $\Theta_j={\mathbb{R}}$, \begin{equation}P_j(\beta_j;\lambda)=2\lambda|\beta_j|,\label{lassoP}\end{equation}
and (\ref{cw}) becomes
\begin{equation}\label{lasso}
\beta_j^{(k+1)}={\mathrm{sign}}(u_j)\left(\frac{|u_j|-\lambda}{d_j}\right)_+.
\end{equation}
Here, for $a \in \mathbb{R}$, $(a)_+$ denotes $\max\{a,0\}$.

\item[]2. The nonnegative garrote (Breiman 1995), where $\Theta_j=\{x:x\hat{\beta}_j\geqslant0\}$, $P_j(\beta_j;\lambda)=2\lambda\beta_j/\hat{\beta}_j$,
$\hat{\beta}_j$ is the OLS estimator of $\beta_j$, and (\ref{cw}) becomes
$$\beta_j^{(k+1)}=\left(\frac{u_j\hat{\beta}_j-\lambda}{d_j\hat{\beta}_j^2}\right)_+\hat{\beta}_j.$$

\item[]3. The elastic-net (Zou and Hastie 2005), where $\Theta_j={\mathbb{R}}$,
\begin{equation}P_j(\beta_j;\lambda)=2\lambda_1|\beta_j|+\lambda_2\beta_j^2.\label{netP}\end{equation}
and (\ref{cw}) becomes \begin{equation}\beta_j^{(k+1)}= {\mathrm{sign}}(u_j)\left(\frac{|u_j|-\lambda_1}{d_j+\lambda_2}\right)_+.\label{net}\end{equation}

\item[]5. The SCAD (Fan and Li 2001), where $\Theta_j={\mathbb{R}}$, $P_j(\beta_j;\lambda)=2P_\lambda(|\beta_j|)$, and \begin{equation}
\label{scadP} P_\lambda'(\theta)=\lambda I(\theta\leqslant\lambda)+(a\lambda-\theta)_+ I(\theta>\lambda)/(a-1),
\end{equation}with $a>2$, $\lambda\geqslant0$, and $\theta>0$. Here, $I$ is the indicator function. If $\m{X}$ in (\ref{lm})
is standardized as in (\ref{stand}) with $d_j\geqslant1$
for all $j$, (\ref{cw}) becomes \begin{equation}\label{scad} \beta_j^{(k+1)} =\left\{\begin{array}{ll}{\mathrm{sign}}(u_j)\big(|u_j|-\lambda\big)_+/d_j,&\text{when}\
|u_j|\leqslant(d_j+1)\lambda,
\\{\mathrm{sign}}(u_j)\big[(a-1)|u_j|-a\lambda\big]/\big[(a-1)d_j-1\big],\quad&\text{when}\ (d_j+1)\lambda<|u_j|\leqslant a\lambda d_j,
\\ u_j/d_j,&\text{when}\ |u_j|>a\lambda d_j.\end{array}\right.
\end{equation}

\item[]6. The MCP (Zhang 2010), where $\Theta_j={\mathbb{R}}$, $P_j(\beta_j;\lambda)=2P_\lambda(|\beta_j|)$, and \begin{equation}P_\lambda'(\theta)=(\lambda-\theta/a)
I(\theta\leqslant a\lambda)\label{mcpP}\end{equation} with $a>1$ and $\theta>0$. If $\m{X}$ in (\ref{lm}) is standardized as in (\ref{stand}) with $d_j\geqslant1$ for
all $j$, (\ref{cw}) becomes \begin{equation}\beta_j^{(k+1)} =\left\{\begin{array}{ll}{\mathrm{sign}}(u_j)a\big(|u_j|-\lambda\big)_+/(ad_j-1),\quad&\text{when}\
|u_j|\leqslant a\lambda d_j,
\\ u_j/d_j,&\text{when}\ |u_j|>a\lambda d_j.\end{array}\right.\label{mcp}\end{equation}

\item[]7. The ``Berhu" penalty (Owen 2006), where $\Theta_j={\mathbb{R}}$, $P_j(\beta_j;\lambda)=2\lambda\big\{|\beta_j|I(|\beta_j|<\delta)
+(\beta_j^2+\delta^2)I(|\beta_j| \geqslant\delta)/(2\delta)\big\}$ for some $\delta>0$, and (\ref{cw}) becomes
$$\beta_j^{(k+1)}=\left\{\begin{array}{ll}{\mathrm{sign}}(u_j)\big(|u_j|- \lambda\big)_+/d_j,\quad&\text{when}\ |u_j|<\lambda+d_j\delta,
\\ u_j\delta/(\lambda+d_j\delta),&\text{when}\ |u_j|\geqslant\lambda+d_j\delta.\end{array}\right.$$

\end{description}

OEM for (\ref{pls}) is an \textsc{EM} algorithm. Let the observed data $\v{y}$ follow the model in \eqref{lm}. Assume that the complete data $\v{y}_{\mathrm{c}}=(\v{y}',\ \v{y}_{\mathrm{miss}}')'$ in (\ref{Yc}) follows a regression model
$\v{y}_{\mathrm{c}}=\m{X}_{\mathrm{c}}\vb+\ve_c$, where $\ve_c$ is from $N(\v{0},\ \m{I}_m)$. Let $\hat{\vb}$ be a solution to (\ref{pls}) given by
$\hat{\vb}=\arg\!\max_{{\scriptsize\vb\in\Theta}}L(\vb\mid \v{y})$, and the regularized likelihood function $L(\vb\mid \v{y})$ is
$$(2\pi)^{-n/2}\exp\left(-\frac{1}{2}\|\v{y}-\m{X}\vb\|^2\right)\exp\left[-\frac{1}{2}P(\vb;\lambda)\right].$$
Given $\vb^{(k)}$, the second step of OEM for (\ref{pls}) is the E-step,
\begin{eqnarray*}
&&E\big[\log\{L(\vb|\v{y}_{\mathrm{c}})\}\mid \v{y},\vb^{(k)}\big]
\\&=&-C\big\{\|\v{y}-\m{X}\vb\|^2+E\big(\|\v{y}_{\mathrm{miss}}-\m{X}\vb\|^2\mid \vb^{(k)}\big)+P(\vb;\lambda)\big\}
\\&=&-C\big\{n+\|\v{y}-\m{X}\vb\|^2+\|\m{\Delta}\vb^{(k)}-\m{\Delta}\vb\|^2+P(\vb;\lambda)\big\}
\end{eqnarray*}
for some constant $C>0$. Define
\begin{equation}\label{Q}
Q(\vb\mid \vb^{(k)})=\|\v{y}-\m{X}\vb\|^2+\|\m{\Delta}\vb^{(k)}-\m{\Delta}\vb\|^2+P(\vb;\lambda).
\end{equation}
The third step of OEM is the M-step,
\begin{equation}\label{mQ}
\vb^{(k+1)}=\arg\min_{{\scriptsize\vb\in\Theta}}Q(\vb\mid \vb^{(k)}),
\end{equation} which is equivalent to (\ref{cw}) when $\Theta$ and $P$ in (\ref{pls}) are decomposable.

\begin{example}\label{ex:ffd}
For the model in (\ref{lm}), let the complete matrix $\m{X}_{\mathrm{c}}$ be an orthogonal design from Xu (2009) with 4096 runs in 30 factors. Let $\m{X}$ in (\ref{lm})
be the submatrix of $\m{X}_{\mathrm{c}}$ consisting of the first 3000 rows and let $\v{y}$ be generated from \eqref{lm} with $\sigma=1$ and
\begin{equation}\label{beta}
\beta_j=(-1)^j\exp\big[-2(j-1)/20\big]\ \text{for}\ j=1,\ldots,p.
\end{equation}
Here, let $p=30$, $n=3000$, and the response values for the last 1096 rows of $\m{X}_{\mathrm{c}}$ be missing. OEM is used to solve the SCAD problem with an initial
value $\vb^{(0)}=\v{0}$ and a stopping criterion when relative changes in all coefficients are less than $10^{-6}$. For $\lambda=1$ and $a=3.7$ in (\ref{scadP}), Figure
\ref{fig:objplot} plots values of the objective function in (\ref{pls}) with the SCAD penalty of the OEM sequence against iteration numbers, where the convergence occurs
at iteration 13, and the objective function significantly reduces after two iterations.
\end{example}

\begin{figure}[t]
\begin{center}\scalebox{0.5}[0.5]{\includegraphics{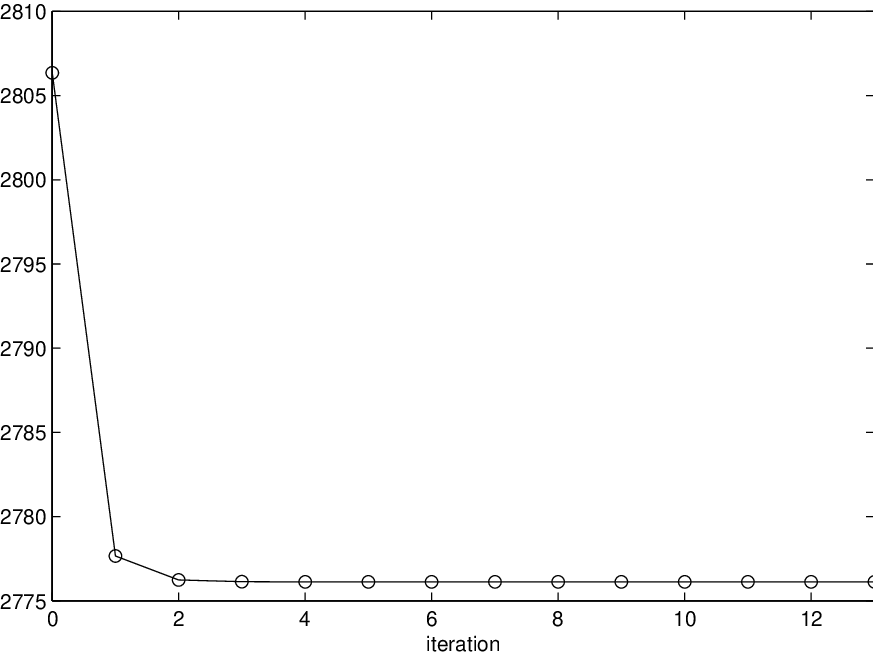}}\end{center}
\caption{Values of the objective function of an OEM sequence for the SCAD against iterations for Example \ref{ex:ffd}.}\label{fig:objplot}
\end{figure}

\section{CONVERGENCE OF THE OEM ALGORITHM}\label{sec:seqconv}
\hskip\parindent\vspace{-0.6cm}

We now derive convergence properties of OEM with the general penalty in (\ref{pls}). We also give results to compare the convergence rates of
OEM for OLS, the elastic-net, and the lasso. These convergence rate results show that for the same data set, OEM converges faster for regularized least squares than ordinary least squares. This provides a new theoretical comparison between these methods. The objective functions of existing EM convergence results like those in Wu (1983), Green (1990) and McLachlan and Krishnan
(2008) are typically continuously differentiable. This condition does not hold for the objective function in (\ref{pls}) with the lasso and other penalties, and these existing results do not directly apply here.

We make several assumptions for $\Theta$ and $P(\vb;\lambda)$ in (\ref{pls}).

\begin{assumption}
The parameter space $\Theta$ is a closed convex subset of ${\mathbb{R}}^p$.\label{as:theta}
\end{assumption}
\begin{assumption}For a fixed $\lambda$, the penalty $P(\vb;\lambda)\rightarrow+\infty$ as $\|\vb\|\rightarrow+\infty$.\label{as:P}\end{assumption}
\begin{assumption}For a fixed $\lambda$, the penalty $P(\vb;\lambda)$ is continuous with respect to $\vb \in \Theta$.\label{as:Pc}\end{assumption}
\noindent All penalties discussed in Section\ \ref{sec:rls} satisfy these assumptions. The assumptions cover the case in which the iterative sequence $\{\vb^{(k)}\}$
defined in (\ref{mQ}) may fall on the boundary of $\Theta$ (Nettleton 1999), like the nonnegative garrote (Breiman 1995) and the nonnegative lasso (Efron et~al. 2004).
The bridge penalty (Frank and Friedman 1993) in (\ref{bridge}) also satisfies the above assumptions.

For the model in (\ref{lm}), denote the objective function in (\ref{pls}) by
\begin{eqnarray}l(\vb)=\|\v{y}-\m{X}\vb\|^2+P(\vb;\lambda).\label{l}\end{eqnarray}
For penalties like the bridge, it is infeasible to perform the M-step in (\ref{mQ}) directly. For this situation, following the generalized EM algorithm in
Dempster, Laird, and Rubin (1977), we define the following \emph{generalized OEM} algorithm
\begin{eqnarray}\vb^{(k)}\rightarrow\vb^{(k+1)}\in\mathcal{M}(\vb^{(k)}),\label{goem}\end{eqnarray} where $\vb\rightarrow\mathcal{M}(\vb)\subset\Theta$ is a
point-to-set map such that $$Q(\vp\mid \vb)\leqslant Q(\vb\mid \vb),\quad\text{for all}\ \vp\in\mathcal{M}(\vb).$$ Here, $Q$ is given in (\ref{Q}). The OEM sequence
defined by (\ref{mQ}) is a special case of (\ref{goem}). For example, the generalized OEM algorithm can be used for the bridge penalty, where $\Theta_j={\mathbb{R}}$ and
\begin{equation}\label{bridge}
P_j(\beta_j;\lambda)=\lambda|\beta_j|^a
\end{equation}
for some $a\in(0,1)$ in (\ref{pls}). Since the solution to (\ref{cw}) with the bridge penalty has no closed form, one may use one-dimensional search to compute
$\beta_j^{(k+1)}$ that satisfies (\ref{goem}). By Assumption 1, $\{\vb\in\Theta: l(\vb)\leqslant l(\vb^{(0)})\}$ is compact for any $l(\vb^{(0)})>-\infty$. By Assumption
\ref{as:Pc}, $\mathcal{M}$ is a closed point-to-set map (Zangwill 1969; Wu 1983).

The objective functions in (\ref{pls}) with the lasso and other penalties are not continuously differentiable. A more general definition of stationary points is needed. We call $\vb\in\Theta$ a stationary point of $l$ if
$$\liminf_{t\rightarrow 0_+}\frac{l\big((1-t)\vb+t\vp\big)-l(\vb)}{t}\geqslant0\quad\text{for all}\ \vp\in\Theta.$$Let $S$ denote the set of stationary points of
$l$. Analogous to Theorem 1 in Wu (1983) on the global convergence of the EM algorithm, we have the following result.

\begin{theorem}\label{th:general}
Let $\{\vb^{(k)}\}$ be a generalized OEM sequence generated by (\ref{goem}). Suppose that
\begin{equation}
l(\vb^{(k+1)})<l(\vb^{(k)})\quad\text{for all }\ \vb^{(k)}\in\Theta\setminus S.\label{con}
\end{equation}
Then all limit points of $\{\vb^{(k)}\}$ are elements of $S$ and $l(\vb^{(k)})$ converges monotonically to $l^*=l(\vb^*)$ for some $\vb^*\in S$.
\end{theorem}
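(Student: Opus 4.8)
The plan is to recognize this statement as a direct instance of Zangwill's (1969) Global Convergence Theorem, following the template Wu (1983) used for the ordinary EM algorithm. I would take the objective $l$ in (\ref{l}) as the descent function, the stationary set $S$ as the prescribed solution set, and the point-to-set map $\mathcal{M}$ in (\ref{goem}) as the iteration map. Zangwill's theorem then requires three ingredients: (i) the iterates stay in a compact set; (ii) $l$ is continuous, does not increase along $\mathcal{M}$ anywhere, and strictly decreases off $S$; and (iii) $\mathcal{M}$ is closed off $S$. Given these, the two conclusions follow at once.

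The crux is the majorization-minimization (MM) inequality linking the surrogate $Q$ to $l$. From (\ref{Q}), read with $P$ in place of the SCAD penalty, I would first note the identity $Q(\vb^{(k)}\mid\vb^{(k)})=l(\vb^{(k)})$, since the added term $\|\m{\Delta}\vb^{(k)}-\m{\Delta}\vb\|^2$ vanishes at $\vb=\vb^{(k)}$, together with the majorization $l(\vb)\leqslant Q(\vb\mid\vb^{(k)})$ for every $\vb$, which holds because that same term is nonnegative. Combining these with the defining property $Q(\vp\mid\vb^{(k)})\leqslant Q(\vb^{(k)}\mid\vb^{(k)})$ of the map (\ref{goem}) gives
$$l(\vb^{(k+1)})\leqslant Q(\vb^{(k+1)}\mid\vb^{(k)})\leqslant Q(\vb^{(k)}\mid\vb^{(k)})=l(\vb^{(k)}),$$
so $\{l(\vb^{(k)})\}$ is monotonically non-increasing. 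Hence every iterate lies in the sublevel set $\{\vb\in\Theta:l(\vb)\leqslant l(\vb^{(0)})\}$, which is compact by Assumptions~\ref{as:theta} and \ref{as:P}, establishing (i); this same inequality supplies the non-increase required on $S$, while the strict decrease off $S$ is precisely hypothesis (\ref{con}).

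It then remains only to collect the pieces and invoke Zangwill's theorem. Continuity of $l$ is Assumption~\ref{as:Pc}; closedness of $\mathcal{M}$ was already recorded in the text preceding the statement; and the descent conditions on and off $S$ were just verified. The theorem therefore yields that every limit point of $\{\vb^{(k)}\}$ lies in $S$. For the final claim, since $\{l(\vb^{(k)})\}$ is monotone and bounded below on the compact sublevel set, it converges to some $l^*$; choosing a subsequence $\vb^{(k_j)}\to\vb^*$, continuity of $l$ and $\vb^*\in S$ give $l^*=l(\vb^*)$ with $\vb^*\in S$.

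I expect the main obstacle to be the careful bookkeeping of the descent structure rather than any hard analysis: one must verify that $Q$ genuinely majorizes $l$ with equality at the current iterate, and keep the strict decrease off $S$ (from (\ref{con})) cleanly separated from the mere non-increase on $S$ (from the MM inequality), exactly as Zangwill's hypotheses demand. A secondary point worth noting is that $S$ is defined through the generalized directional condition with $\liminf_{t\rightarrow 0_+}$, so no differentiability of $l$ is invoked; $S$ enters the argument solely as the designated solution set.
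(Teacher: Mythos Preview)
Your proposal is correct and matches the paper's approach: the paper does not give an explicit proof of this theorem at all, but simply states it as ``analogous to Theorem~1 in Wu (1983)'' after recording, in the preceding paragraph, precisely the ingredients you assemble (compactness of the sublevel set from Assumptions~\ref{as:theta}--\ref{as:P}, closedness of $\mathcal{M}$ from Assumption~\ref{as:Pc}). Your MM chain $l(\vb^{(k+1)})\leqslant Q(\vb^{(k+1)}\mid\vb^{(k)})\leqslant Q(\vb^{(k)}\mid\vb^{(k)})=l(\vb^{(k)})$ and the subsequent invocation of Zangwill's Global Convergence Theorem make explicit exactly what the paper leaves to the reader.
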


\begin{theorem}If $\vb^*$ is a local minimum of $Q(\vb\mid \vb^*)$, then $\vb^*\in S$.\label{th:local}
\end{theorem}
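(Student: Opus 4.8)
The plan is to exploit the fact that, with the second argument frozen at $\vb^*$, the surrogate $Q(\cdot\mid\vb^*)$ differs from the true objective $l$ only by a smooth quadratic that is minimized at $\vb^*$. Writing $\m{A}=\m{\Delta}'\m{\Delta}$ and using (\ref{Q}) with the general penalty, I would first record the exact identity
\begin{equation*}
Q(\vb\mid\vb^*)=l(\vb)+(\vb-\vb^*)'\m{A}(\vb-\vb^*),
\end{equation*}
so that $Q(\vb^*\mid\vb^*)=l(\vb^*)$ and the remainder $r(\vb):=(\vb-\vb^*)'\m{A}(\vb-\vb^*)$ is nonnegative and vanishes at $\vb=\vb^*$. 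Because $r$ is a quadratic with $\nabla r(\vb^*)=\v{0}$, its one-sided directional derivative along any feasible direction is zero; this is the feature that will let the remainder drop out.

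Next I would translate local minimality into a statement about directional derivatives. Fix an arbitrary $\vp\in\Theta$; by Assumption \ref{as:theta} the segment $(1-t)\vb^*+t\vp$ lies in $\Theta$ for all $t\in[0,1]$, so local minimality of $Q(\cdot\mid\vb^*)$ at $\vb^*$ gives $Q((1-t)\vb^*+t\vp\mid\vb^*)\geqslant Q(\vb^*\mid\vb^*)$ for all sufficiently small $t>0$, whence
\begin{equation*}
\liminf_{t\to0_+}\frac{Q((1-t)\vb^*+t\vp\mid\vb^*)-Q(\vb^*\mid\vb^*)}{t}\geqslant0.
\end{equation*}
Substituting $\vb=(1-t)\vb^*+t\vp$ into the identity above, the remainder contributes $r((1-t)\vb^*+t\vp)=t^2(\vp-\vb^*)'\m{A}(\vp-\vb^*)$, whose difference quotient tends to $0$. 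Since this term has a genuine limit, I can subtract it off without disturbing the $\liminf$, obtaining
\begin{equation*}
\liminf_{t\to0_+}\frac{l((1-t)\vb^*+t\vp)-l(\vb^*)}{t}=\liminf_{t\to0_+}\frac{Q((1-t)\vb^*+t\vp\mid\vb^*)-Q(\vb^*\mid\vb^*)}{t}\geqslant0.
\end{equation*}
As $\vp\in\Theta$ was arbitrary, this is exactly the defining inequality for $\vb^*$ to be a stationary point of $l$, i.e. $\vb^*\in S$.

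The only delicate point is the manipulation of the $\liminf$ in the final display: one must use that the quadratic remainder's difference quotient converges (to $0$), so that the elementary rule $\liminf(a_t+b_t)=\liminf a_t+\lim b_t$ applies and lets me transfer the inequality from $Q$ to $l$. Everything else is bookkeeping: the decomposition is exact, the convexity of $\Theta$ guarantees the probe segment stays feasible, and no differentiability of $P$ is required, since the nonsmoothness is carried entirely inside $l$ and is handled through the one-sided directional-derivative definition of $S$.
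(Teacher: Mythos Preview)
Your proof is correct and follows essentially the same route as the paper: both rest on the observation that $l(\vb)-Q(\vb\mid\vb^*)=-(\vb-\vb^*)'\m{A}(\vb-\vb^*)$ is smooth with vanishing gradient at $\vb^*$, so the local minimality of $Q(\cdot\mid\vb^*)$ at $\vb^*$ transfers directly to the stationarity inequality for $l$. Your version spells out the directional-derivative argument and the $\liminf$ bookkeeping that the paper leaves implicit, but the idea is identical.
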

This theorem follows from the fact that $l(\vb)-Q(\vb\mid \vb^*)$ is differentiable and
$$\frac{\partial \big[l(\vb)-Q(\vb\mid \vb^*)\big]}{\partial \vb}\Big|_{\vb=\vb^*}=0.$$

\begin{remark}
By Theorem~\ref{th:local}, if $\vb^{(k)}\notin S$, then $\vb^{(k)}$ cannot be a local minimum of $Q(\vb\mid \vb^{(k)})$. Thus, there exists at least one point
$\vb^{(k+1)}\in \mathcal{M}(\vb^{(k)})$ such that $Q(\vb^{(k+1)}\mid \vb^{(k)})<Q(\vb^{(k)}\mid \vb^{(k)})$ and therefore satisfies the condition in (\ref{con}). As a
special case, an OEM sequence generated by (\ref{mQ}) satisfies (\ref{con}) in Theorem~\ref{th:general}.\end{remark}

Next, we derive convergence results of a generalized OEM sequence $\{\vb^{(k)}\}$ in (\ref{goem}), which, by Theorem \ref{th:local}, hold automatically for an
OEM sequence. If the penalty function $P(\vb;\lambda)$ is convex and $l(\vb)$ has a unique minimum, Theorem \ref{th:convex} shows that $\{\vb^{(k)}\}$ converges
to the global minimum.
\begin{theorem}For $\{\vb^{(k)}\}$ defined in Theorem \ref{th:general}, suppose that $l(\vb)$ in (\ref{l}) is a convex function on $\Theta$ with a unique
minimum $\vb^*$ and that (\ref{con}) holds for $\{\vb^{(k)}\}$. Then $\vb^{(k)}\rightarrow\vb^*$ as $k\rightarrow\infty$.\label{th:convex}\end{theorem}
\begin{proof} It suffices to show that $S=\{\vb^*\}$. For $\vp\in\Theta$ with $\vp\neq\vb^*$ and $t>0$,
$$\frac{l\big((1-t)\vp+t\vb^*\big)-l(\vb^*)}{t}\leqslant\frac{tl(\vb^*)+(1-t)l(\vp)-l(\vp)}{t} =l(\vb^*)-l(\vp)<0.$$This implies $\vp\notin S$.
\end{proof}

Theorem~\ref{th:conv} discusses the convergence of an OEM sequence $\{\vb^{(k)}\}$ for more general penalties. For $a\in{\mathbb{R}}$, define $S(a)=\{\vp\in S:\
l(\vp)=a\}$. From Theorem \ref{th:general}, all limit points of an OEM sequence are in $S(l^*)$, where $l^*$ is the limit of $l(\vb^{(k)})$ in Theorem \ref{th:general}.
Theorem \ref{th:conv} states that the limit point is unique under certain conditions.

\begin{theorem}Let $\{\vb^{(k)}\}$ be a generalized OEM sequence generated by (\ref{goem}) with $\m{\Delta}' \m{\Delta}>0$. If (\ref{con}) holds, then all
limit points of $\{\vb^{(k)}\}$ are in a connected and compact subset of $S(l^*)$. In particular, if the set $S(l^*)$ is discrete in that its only connected components
are singletons, then $\vb^{(k)}$ converges to some $\vb^*$ in $S(l^*)$ as $k\rightarrow\infty$.\label{th:conv}\end{theorem}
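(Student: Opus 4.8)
The plan is to reduce the statement to the classical fact that a bounded sequence whose successive differences tend to zero has a connected and compact set of limit points. Since Theorem~\ref{th:general} already places every limit point of $\{\vb^{(k)}\}$ inside $S(l^*)$ and guarantees that $l(\vb^{(k)})$ decreases to $l^*$, and since $\{\vb\in\Theta:\ l(\vb)\leqslant l(\vb^{(0)})\}$ is compact, the whole sequence lies in a fixed compact set; its limit-point set $L$ is therefore automatically closed and bounded, hence compact, and contained in $S(l^*)$. The only missing ingredient is connectedness, for which the essential step is to show $\|\vb^{(k+1)}-\vb^{(k)}\|\rightarrow0$.

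First I would exploit the quadratic structure of the surrogate. Writing out $Q$ from (\ref{Q}) with the general penalty gives the identity $Q(\vb\mid\vb^{(k)})=l(\vb)+\|\m{\Delta}(\vb^{(k)}-\vb)\|^2$, so that $Q(\vb^{(k)}\mid\vb^{(k)})=l(\vb^{(k)})$. The defining descent property of the generalized OEM map in (\ref{goem}), namely $Q(\vb^{(k+1)}\mid\vb^{(k)})\leqslant Q(\vb^{(k)}\mid\vb^{(k)})$, then yields
\begin{equation*}
l(\vb^{(k+1)})+\|\m{\Delta}(\vb^{(k+1)}-\vb^{(k)})\|^2\leqslant l(\vb^{(k)}).
\end{equation*}
Summing this telescoping inequality over $k$ and using that $l(\vb^{(k)})$ decreases to the finite limit $l^*$ shows $\sum_k\|\m{\Delta}(\vb^{(k+1)}-\vb^{(k)})\|^2\leqslant l(\vb^{(0)})-l^*<\infty$, whence $\|\m{\Delta}(\vb^{(k+1)}-\vb^{(k)})\|\rightarrow0$. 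Here is where the hypothesis $\m{\Delta}'\m{\Delta}>0$ enters: positive definiteness gives $\|\m{\Delta}\v{v}\|^2\geqslant\lambda_{\min}(\m{\Delta}'\m{\Delta})\,\|\v{v}\|^2$ with $\lambda_{\min}(\m{\Delta}'\m{\Delta})>0$, so $\|\vb^{(k+1)}-\vb^{(k)}\|\rightarrow0$ as desired.

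With $\{\vb^{(k)}\}$ bounded and $\|\vb^{(k+1)}-\vb^{(k)}\|\rightarrow0$, I would invoke (or reprove) the topological lemma that $L$ is connected: if $L$ split into two nonempty disjoint compact pieces $A$ and $B$ at distance $\delta>0$, then for $\epsilon<\delta/3$ the sequence visits the $\epsilon$-neighborhoods of both $A$ and $B$ infinitely often, yet once the successive jumps are smaller than $\epsilon$ each crossing must deposit a term in the compact gap between those neighborhoods, producing a limit point outside $A\cup B$ --- a contradiction. Thus $L$ is a connected, compact subset of $S(l^*)$, which is the first assertion. For the `in particular' claim, a connected subset of $S(l^*)$ must lie in a single connected component; if every component is a singleton, then $L=\{\vb^*\}$, and a bounded sequence with a unique limit point converges to it, giving $\vb^{(k)}\rightarrow\vb^*$.

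I expect the genuinely delicate part to be the connectedness lemma rather than the analytic estimate: the descent inequality and the summability argument are routine once the surrogate identity is recorded, but the topological step requires the gap/crossing argument to be stated carefully (and is where one might instead cite the corresponding result, e.g.\ the analogue of Theorem~2 in Wu (1983), or Ostrowski's theorem on iterates with vanishing successive differences). Everything else --- compactness of the level set and containment of the limit points in $S(l^*)$ --- is inherited directly from Assumption~\ref{as:theta} and Theorem~\ref{th:general}.
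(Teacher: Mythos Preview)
Your proposal is correct and follows essentially the same route as the paper: both use the surrogate identity $Q(\vb\mid\vb^{(k)})=l(\vb)+\|\m{\Delta}(\vb-\vb^{(k)})\|^2$ and the descent inequality to deduce $\|\m{\Delta}(\vb^{(k+1)}-\vb^{(k)})\|\rightarrow0$, then invoke $\m{\Delta}'\m{\Delta}>0$ to obtain $\|\vb^{(k+1)}-\vb^{(k)}\|\rightarrow0$, and finally appeal to the Ostrowski-type connectedness lemma (the paper simply cites Theorem~5 of Wu (1983), which you have sketched directly). Your telescoping-sum argument is slightly more than needed---the paper just uses $l(\vb^{(k)})-l(\vb^{(k+1)})\rightarrow0$---but this is a harmless strengthening.
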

\begin{proof} Note that $Q(\vb^{(k+1)}\mid
\vb^{(k)})=l(\vb^{(k+1)})+\|\m{\Delta}\vb^{(k+1)}-\m{\Delta}\vb^{(k)}\|^2\leqslant Q(\vb^{(k)}\mid \vb^{(k)})=l(\vb^{(k)})$. By Theorem \ref{th:general},
$\|\m{\Delta}\vb^{(k+1)}-\m{\Delta}\vb^{(k)}\|^2\leqslant l(\vb^{(k)})-l(\vb^{(k+1)})\rightarrow 0$ as $k\rightarrow\infty$. Thus,
$\|\vb^{(k+1)}-\vb^{(k)}\|\rightarrow0$. This theorem now follows immediately from Theorem 5 of Wu (1983).\end{proof}

Since the bridge, SCAD and MCP penalties all satisfy the condition that $S(l^*)$ is discrete, an OEM sequence for any of them converges to the stationary points of $l$.
Theorem \ref{th:conv} is obtained under the condition $\m{\Delta}' \m{\Delta}$ is not singular.
It is easy to show that Theorem \ref{th:conv} holds with probability one if the error $\ve$ in (\ref{lm}) has a continuous distribution.

We now derive the convergence rate of the OEM sequence in (\ref{mQ}). Following Dempster, Laird, and Rubin (1977), write
\begin{equation*}
\vb^{(k+1)}=\v{M}(\vb^{(k)}),
\end{equation*}
where the map $\v{M}(\vb)=(M_1(\vb),\ldots,M_p(\vb))'$ is defined by (\ref{mQ}). We capture the convergence rate of the OEM sequence $\{\vb^{(k)}\}$ through $\v{M}$.
Assume that (\ref{XcdI}) holds for $d\geqslant\gamma_1$, where $\gamma_1$
is the largest eigenvalue of $\m{X}'\m{X}$. 
For active orthogolization in Section \ref{sec:arc}, this assumption holds by taking $\m{S}=\m{I}_p$; see Remark \ref{rm:1}.

Let $\vb^*$ be the limit of the OEM sequence $\{\vb^{(k)}\}$. As in Meng (1994), we call
\begin{equation}\label{R}
R=\limsup_{k\rightarrow\infty}\frac{\|\vb^{(k+1)}-\vb^*\|}{\|\vb^{(k)}-\vb^*\|}
=\limsup_{k\rightarrow\infty}\frac{\|\v{M}(\vb^{(k)})-\v{M}(\vb^*)\|}{\|\vb^{(k)}-\vb^*\|},
\end{equation}the global rate of convergence for the OEM
sequence. If there is no penalty in (\ref{pls}), i.e., computing the OLS estimator, the global rate of convergence $R$ in  (\ref{R}) becomes the largest eigenvalue of
$\m{J}(\vb^*)$, denoted by $R_0$, where $\m{J}(\vp)$ is the $p\times p$ Jacobian matrix for $\v{M}(\vp)$ having $(i,j)$th entry $\partial M_i(\vp)/\partial \vp_j$. If
(\ref{XcdI}) holds, then $\m{J}(\vb^*)=\m{A}/d$ with $\m{A}=\m{\Delta}' \m{\Delta}$. Thus,
\begin{equation}
R_0=\frac{d-\gamma_p}{d}.\label{rate0}
\end{equation}

For (\ref{pls}), the penalty function $P(\vb;\lambda)$ typically is not sufficiently smooth and $R$ in (\ref{R}) has no analytic form. Theorem~\ref{th:rate} gives an
upper bound of $R_{\mathrm{net}}$, the value of $R$ for the elastic-net penalty in (\ref{netP}) with $\lambda_1,\lambda_2\geqslant0$.

\begin{theorem}\label{th:rate}
For $\m{\Delta}$ from (\ref{Xc}), if (\ref{XcdI}) holds, then $R_{\mathrm{NET}}\leqslant R_0$.
\end{theorem}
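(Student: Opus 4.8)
The plan is to realize $R_{\mathrm{NET}}$ as the asymptotic one-step contraction ratio of the elastic-net OEM map $\v{M}$ in (\ref{oemg})--(\ref{R}) and to bound that ratio directly, exploiting the non-expansiveness of soft-thresholding rather than differentiating the (non-smooth) update. First I would write $\v{M}$ explicitly. Under (\ref{XcdI}) every diagonal entry $d_j$ equals $d$ and $\m{A}=\m{\Delta}'\m{\Delta}=d\m{I}_p-\m{X}'\m{X}$, so the elastic-net update (\ref{net}) together with (\ref{vu}) gives $\v{M}(\vb)=(M_1(\vb),\ldots,M_p(\vb))'$ with $M_j(\vb)=\mathrm{sign}(u_j)(|u_j|-\lambda_1)_+/(d+\lambda_2)$ and $\v{u}=\m{X}'\v{Y}+\m{A}\vb$. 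Since $\m{A}$ is symmetric positive semidefinite with eigenvalues $d-\gamma_1,\ldots,d-\gamma_p$, its largest eigenvalue is $d-\gamma_p$, so $\|\m{A}\v{x}\|\leqslant(d-\gamma_p)\|\v{x}\|$ for every $\v{x}$; note $d-\gamma_p$ is exactly the quantity appearing in $R_0$ in (\ref{rate0}).

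The key step is the elementary fact that the scalar soft-thresholding map $h(u)=\mathrm{sign}(u)(|u|-\lambda_1)_+$ satisfies $|h(u)-h(v)|\leqslant|u-v|$ for all $u,v$ (it is the proximal operator of $\lambda_1|\cdot|$, hence non-expansive). Let $\vb^*$ be the limit of the sequence, a fixed point of $\v{M}$, and write $\v{u}^{(k)}=\m{X}'\v{Y}+\m{A}\vb^{(k)}$ and $\v{u}^*=\m{X}'\v{Y}+\m{A}\vb^*$. Applying $|h(\cdot)-h(\cdot)|\leqslant|\cdot-\cdot|$ coordinatewise,
\[ \|\vb^{(k+1)}-\vb^*\|^2=\frac{1}{(d+\lambda_2)^2}\sum_{j=1}^p\big(h(u_j^{(k)})-h(u_j^*)\big)^2\leqslant\frac{\|\v{u}^{(k)}-\v{u}^*\|^2}{(d+\lambda_2)^2}=\frac{\|\m{A}(\vb^{(k)}-\vb^*)\|^2}{(d+\lambda_2)^2}. \]
Combining this with the eigenvalue bound $\|\m{A}(\vb^{(k)}-\vb^*)\|\leqslant(d-\gamma_p)\|\vb^{(k)}-\vb^*\|$ gives $\|\vb^{(k+1)}-\vb^*\|\leqslant\frac{d-\gamma_p}{d+\lambda_2}\|\vb^{(k)}-\vb^*\|$, so every one-step ratio in (\ref{R}) is at most $\frac{d-\gamma_p}{d+\lambda_2}$. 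Hence $R_{\mathrm{NET}}\leqslant\frac{d-\gamma_p}{d+\lambda_2}\leqslant\frac{d-\gamma_p}{d}=R_0$, the last inequality using $\lambda_2\geqslant0$.

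The main obstacle is conceptual rather than computational: because the penalty is non-smooth, the Jacobian route that \emph{defines} $R_0$ does not transfer verbatim, since $h$ fails to be differentiable where $|u_j^*|=\lambda_1$ (the active-set boundary). The Lipschitz argument above is designed precisely to sidestep this, as it never differentiates $\v{M}$ and needs no genericity assumption on the active set; it requires only the non-expansiveness of $h$ and $\|\m{A}\v{x}\|\leqslant(d-\gamma_p)\|\v{x}\|$. If one nonetheless wishes to match the eigenvalue formulation of $R_0$, then away from the active-set boundary the Jacobian is $\m{J}(\vb^*)=\m{D}\m{A}/(d+\lambda_2)$ for a $0$--$1$ diagonal projection $\m{D}$, and $\rho(\m{D}\m{A})\leqslant d-\gamma_p$ follows because the nonzero eigenvalues of $\m{D}\m{A}$ coincide with those of the symmetric positive semidefinite matrix $\m{D}\m{A}\m{D}$, whose Rayleigh quotient obeys $(\m{D}\v{x})'\m{A}(\m{D}\v{x})\leqslant(d-\gamma_p)\|\m{D}\v{x}\|^2\leqslant(d-\gamma_p)\|\v{x}\|^2$, yielding the same bound.
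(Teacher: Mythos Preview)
Your argument is correct and is essentially the same as the paper's: both use the coordinatewise Lipschitz-$1$ property of the soft-thresholding map together with the operator-norm bound $\|\m{A}\|=d-\gamma_p$ coming from (\ref{XcdI}) to control $\|\v{M}(\vb^{(k)})-\v{M}(\vb^*)\|/\|\vb^{(k)}-\vb^*\|$. The only cosmetic difference is that you keep the sharper constant $1/(d+\lambda_2)$ and obtain $R_{\mathrm{NET}}\leqslant(d-\gamma_p)/(d+\lambda_2)$ before relaxing to $R_0$, whereas the paper bounds by $1/d$ immediately; your closing Jacobian discussion is extra and not needed for the proof.
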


\begin{proof} Let $\v{x}_j$ denote the $j$th column of $n\times p$ matrix $\m{X}$ in \eqref{lm} and $\v{a}_j$ denote the $j$th column of $\m{A}=\m{\Delta}'\m{\Delta}$,
respectively.
For an OEM sequence for the elastic-net, by (\ref{net}),\begin{equation*}M_j(\vb)=f(\v{x}_j' \v{y}+\v{a}_j' \vb),\ \text{for}\ j=1,\ldots,p,\end{equation*} where
$$f(u)={\mathrm{sign}}(u)\left(\frac{|u|-\lambda_1}{d+\lambda_2}\right)_+.$$

For $j=1,\ldots,p$, observe that\begin{eqnarray*}\frac{|M_j(\vb^{(k)})-M_j(\vb^*)|}{\|\vb^{(k)}-\vb^*\|}&=&\frac{|f(\v{x}_j' \v{y}+\v{a}_j' \vb^{(k)}) -f(\v{x}_j'
\v{y}+\v{a}_j' \vb^*)|} {|(\v{x}_j' \v{y}+\v{a}_j' \vb^{(k)})-(\v{x}_j' \v{y}+\v{a}_j' \vb^*)|}\\&&\cdot\frac{|(\v{x}_j' \v{y}+\v{a}_j' \vb^{(k)})- (\v{x}_j'
\v{y}+\v{a}_j' \vb^*)|}{\|\vb^{(k)}-\vb^*\|}
\\&\leqslant&\frac{1}{d}\cdot\frac{|\v{a}_j' (\vb^{(k)}-\vb^*)|}{\|\vb^{(k)}-\vb^*\|}.\end{eqnarray*}
Thus, $$\frac{\|M(\vb^{(k)})-M(\vb^*)\|}{\|\vb^{(k)}-\vb^*\|}\leqslant\frac{1}{d}\cdot\frac{\|A(\vb^{(k)}-\vb^*)\|}{\|\vb^{(k)}-\vb^*\|}
\leqslant\frac{d-\gamma_p}{d}.$$This completes the proof.\end{proof}

\begin{remark}
Theorem~\ref{th:rate} indicates that, for the same $\m{X}$ and $\v{y}$  in (\ref{lm}), the OEM solution for the elastic-net numerically converges faster than its
counterpart for the OLS. Since the lasso is a special case of the elastic-net with $\lambda_2=0$ in (\ref{netP}), this theorem holds for the lasso as well.
\end{remark}

\begin{remark}From (\ref{rate0}) and Theorem \ref{th:rate}, the convergence rate of the OEM algorithm depends on the ratio of $\gamma_p$
and $d$ equal to or larger than $\gamma_1$. This rate is the fastest when $d=\gamma_1=\gamma_p$, i.e., if $\m{X}$ is orthogonal and standardized. This result suggests
that OEM converges faster if $\m{X}$ has controlled correlation like from a supersaturated design or a nearly orthogonal Latin hypercube design (Owen 1994).
\label{rm:r}
\end{remark}

\begin{figure}[t]\begin{center}\scalebox{0.6}[0.6]{\includegraphics{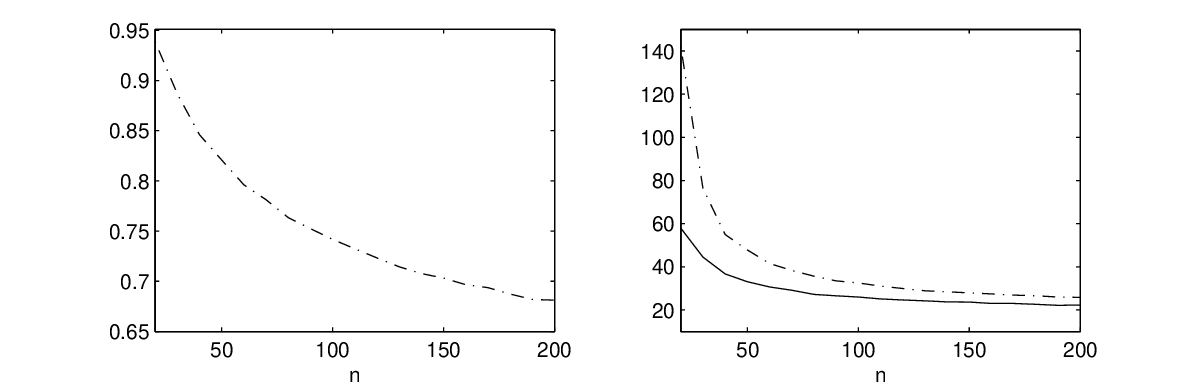}}\end{center}
\caption{(Left) the average values of $R_0$ in (\ref{rate0}) against increasing $n$ for Example~\ref{ex:rate}; (right) the average iteration numbers against increasing
$n$ for Example~\ref{ex:rate}, where the dashed and solid lines denote the OLS estimator and the lasso, respectively.} \label{fig:convR}
\end{figure}

\begin{example}\label{ex:rate}
We generate $\m{X}$ from $p$ dimensional Gaussian distribution $N(\v{0},\m{V})$ with $n$ independent observations, where the $(i,j)$th entry of $\m{V}$ is 1 for $i=j$
and $\rho$ for $i\neq j$. Values of $\v{y}$ and $\vb$ are generated by (\ref{lm}) and (\ref{beta}). The same setup was used in Friedman, Hastie, and Tibshirani (2009).
For $p=10,\ \rho=0.1,\ \lambda=0.5$ and increasing $n$, the left panel of Figure \ref{fig:convR} depicts the average values of $R_0$ in (\ref{rate0}) against increasing
$n$ and the right panel of the figure depicts the average iteration numbers against increasing $n$, with the dashed and solid lines corresponding to the OLS estimator
and the lasso, respectively. This figure indicates that OEM requires \emph{fewer} iterations as $n$ becomes larger, which makes OEM particulary attractive for situations
with big tall data. The OEM sequence for the lasso requires fewer iteration than its counterpart for the OLS, thus validating Theorem~\ref{th:rate}.
\end{example}

\section{POSSESSING GROUPING COHERENCE}
\label{sec:fulali} \hskip\parindent\vspace{-0.6cm}

Data with fully aliasing structures commonly appear in observational studies and designed experiments. Here we consider the convergence of the OEM algorithm
when the regression matrix $\m{X}$ in (\ref{lm}) is singular due to fully aliased columns. Let $\m{X}$ be standardized as in (\ref{stand}) with columns
$\v{x}_1,\ldots,\v{x}_p$. If $\v{x}_i$ and $\v{x}_j$ are fully aliased, i.e., $|\v{x}_i|=|\v{x}_j|$, then the objective function in (\ref{pls}) for the lasso is not
strictly convex and has many minima (Zou and Hastie 2005).

If some columns of $\m{X}$ are identical, it is desirable to have grouping coherence with the same regression coefficient. This is suggested by Zou and Hastie (2005) and
others. Definition~\ref{df1} makes this precise.

\begin{definition}\label{df1}
An estimator $\hat{\vb}=(\hat{\beta}_1,\ldots,\hat{\beta}_p)'$ of $\vb$ in (\ref{lm}) has \emph{grouping coherence} if $\v{x}_i=\v{x}_j$ implies
$\hat{\beta}_i=\hat{\beta}_j$ and $\v{x}_i=-\v{x}_j$ implies $\hat{\beta}_i=-\hat{\beta}_j$.
\end{definition}

Some penalties other than the lasso can produce estimators with grouping coherence (Zou and Hastie 2005; Bondell and Reich 2008; Tutz and Ulbricht 2009; Petry and
Tutz 2012), but they require more than one tuning parameters, which leads to more computational burden. \emph{Instead of changing the penalty, OEM can
give a lasso solution with this property}. This also holds for SCAD and MCP. Recall that $\hat{\vb}^*$ in \eqref{vb}, which can be obtained by OEM, has a stronger
property than grouping coherence.

Let $\v{0}_p$ denote the zero vector in ${\mathbb{R}}^p$. Let $\v{e}_{ij}^+$ be the vector obtained by replacing the $i$th and $j$th entries of $\v{0}_p$ with $1$. Let
$\v{e}_{ij}^-$ be the vector obtained by replacing the $i$th and $j$th entries of $\v{0}_p$ with $1$ and $-1$, respectively. Let ${\mathcal{E}}$ denote the set of all
$\v{e}_{ij}^+$ and $\v{e}_{ij}^-$. By Definition \ref{df1}, an estimator $\hat{\vb}$ has grouping coherence if and only if for any $\va\in{\mathcal{E}}$ with
$\m{X}\va=\v{0}$, $\va'\hat{\vb}=0$.

\begin{lemma}\label{lm2}
Suppose that (\ref{XcdI}) holds. For the OEM sequence $\{\vb^{(k)}\}$ of the lasso, SCAD or MCP, if $\m{X}\va=\v{0}$ and $\va'\vb^{(k)}=0$ for $\va\in{\mathcal{E}}$,
then $\va'\vb^{(k+1)}=0$. \end{lemma}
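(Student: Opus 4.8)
The plan is to exploit the closed-form separable updates for each of the three penalties together with the structure of the vectors $\va\in{\mathcal{E}}$, which couple exactly two coordinates. Fix $\va\in{\mathcal{E}}$ with $\m{X}\va=\v{0}$ and $\va'\vb^{(k)}=0$; I want to show $\va'\vb^{(k+1)}=0$. There are two cases corresponding to $\va=\v{e}_{ij}^+$ and $\va=\v{e}_{ij}^-$. In the first case, $\m{X}\va=\v{x}_i+\v{x}_j=\v{0}$ means $\v{x}_i=-\v{x}_j$, and $\va'\vb^{(k)}=0$ means $\beta_i^{(k)}+\beta_j^{(k)}=0$; the goal becomes $\beta_i^{(k+1)}=-\beta_j^{(k+1)}$. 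In the second case, $\m{X}\va=\v{x}_i-\v{x}_j=\v{0}$ gives $\v{x}_i=\v{x}_j$ and $\va'\vb^{(k)}=0$ gives $\beta_i^{(k)}=\beta_j^{(k)}$; the goal becomes $\beta_i^{(k+1)}=\beta_j^{(k+1)}$.

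The key computational step is to compare $u_i$ and $u_j$ from (\ref{vu}), since each coordinate update depends on $\vb^{(k)}$ only through its own $u_\ell$ and on the common $d$ from (\ref{XcdI}). Recall $u_\ell=\v{x}_\ell'\v{Y}+\v{a}_\ell'\vb^{(k)}$ where $\v{a}_\ell$ is the $\ell$th column of $\m{A}=\m{\Delta}'\m{\Delta}=d\m{I}_p-\m{X}'\m{X}$. First I would show that in the identical-column case $\v{x}_i=\v{x}_j$, the relation $\va'\vb^{(k)}=0$ forces $u_i=u_j$: the response contributions $\v{x}_i'\v{Y}$ and $\v{x}_j'\v{Y}$ agree because the columns agree, and the difference of the penalty-matrix contributions $(\v{a}_i-\v{a}_j)'\vb^{(k)}$ vanishes because $\v{a}_i-\v{a}_j$ is the $i$th minus $j$th column of $d\m{I}_p-\m{X}'\m{X}$, whose $\m{X}'\m{X}$ parts cancel row-wise (as $\v{x}_i=\v{x}_j$) leaving only a multiple of $\va$ whose inner product with $\vb^{(k)}$ is zero by hypothesis. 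An analogous sign-flipped computation in the aliased case $\v{x}_i=-\v{x}_j$ shows $u_i=-u_j$.

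Once $u_i=u_j$ (resp. $u_i=-u_j$) is established, the conclusion follows because each of the three update maps in (\ref{lasso}), (\ref{scad}), (\ref{mcp}) is of the form $\beta_\ell^{(k+1)}=g(u_\ell)$ for a common scalar function $g$ that shares the same $d$ across coordinates (here $d_i=d_j=d$ by (\ref{XcdI})) and is odd, i.e., $g(-u)=-g(u)$. Oddness is visible directly from the $\mathrm{sign}(u_j)$ factors and the thresholding structure in each formula. Thus $u_i=u_j$ yields $\beta_i^{(k+1)}=g(u_i)=g(u_j)=\beta_j^{(k+1)}$, and $u_i=-u_j$ yields $\beta_i^{(k+1)}=g(u_i)=g(-u_j)=-g(u_j)=-\beta_j^{(k+1)}$, which is exactly $\va'\vb^{(k+1)}=0$ in each case.

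The main obstacle is the careful bookkeeping in the middle step: verifying that $(\v{a}_i-\v{a}_j)'\vb^{(k)}=0$ really reduces to $\va'\vb^{(k)}=0$. The subtlety is that $\v{a}_i-\v{a}_j$ is not literally proportional to $\va$; one must check that the off-diagonal entries of $\m{X}'\m{X}$ in rows $i$ and $j$ agree (because $\v{x}_i=\pm\v{x}_j$ forces $\v{x}_i'\v{x}_\ell=\pm\v{x}_j'\v{x}_\ell$ for every $\ell$), so that $\v{a}_i\mp\v{a}_j$ collapses to a vector supported only on coordinates $i$ and $j$ — precisely a multiple of $\va$ — after which the hypothesis applies. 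I would handle the diagonal entry $d$ and the self-inner-products $\v{x}_i'\v{x}_i=\v{x}_j'\v{x}_j=1$ (by standardization (\ref{stand})) explicitly, since these are what make the surviving terms land exactly on the support of $\va$. Everything else is routine verification of oddness of the three piecewise update functions.
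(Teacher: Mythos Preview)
Your proposal is correct and follows essentially the same approach as the paper: show that $\va'\v{u}=0$ and then use that the coordinate update maps in (\ref{lasso}), (\ref{scad}), (\ref{mcp}) are odd with a common $d$. The paper dispatches what you flag as the ``main obstacle'' in one line by writing $\va'\v{u}=(\m{X}\va)'\v{Y}+d\,\va'\vb^{(k)}-(\m{X}\va)'\m{X}\vb^{(k)}=0$ directly from $\m{X}\va=\v{0}$ and $\va'\vb^{(k)}=0$, so the coordinate-wise bookkeeping on $\v{a}_i\mp\v{a}_j$ is unnecessary.
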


\begin{proof}
For $\v{u}$ in (\ref{vu}), $\va'\v{u}=\va'\m{X}'\v{y}+\va'(d\m{I}_p-\m{X}'\m{X})\vb^{(k)}=0$ for any $\va\in{\mathcal{E}}$ with $\m{X}\va=\v{0}$ and $\va'\vb^{(k)}=0$.
Then by (\ref{lasso}), (\ref{scad}) and (\ref{mcp}), an OEM sequence of the lasso, SCAD or MCP satisfies the condition that if $\va'\v{u}=0$, then $\va'\vb^{(k+1)}=0$
for $\va\in{\mathcal{E}}$. This completes the proof.\end{proof}

\begin{remark}Lemma \ref{lm2} implies that, for $k=1,2,\ldots$, $\vb^{(k)}$ has grouping coherence if $\vb^{(0)}$ has grouping coherence. Thus, if
$\{\vb^{(k)}\}$ converges, then its limit has grouping coherence. By Theorem \ref{th:conv}, if $d>\lambda_1$ in (\ref{XcdI}), then an OEM sequence for the SCAD or MCP
converges to a point with grouping coherence. \end{remark}

When $\m{X}$ in (\ref{lm}) has fully aliased columns, the objective function in (\ref{pls}) for the lasso has many minima and hence the condition in Theorem
\ref{th:convex} does not hold. Theorem \ref{th:gc} shows that, even with full aliasing, an OEM sequence (\ref{lasso}) for the lasso  converges to a point with grouping
coherence.

\begin{theorem} \label{th:gc}Suppose that (\ref{XcdI}) holds. If $\vb^{(0)}$ has grouping coherence, then as $k\rightarrow\infty$,
the OEM sequence $\{\vb^{(k)}\}$ of the lasso converges to a limit that has grouping coherence.
\end{theorem}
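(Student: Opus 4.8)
The plan is to exploit two ingredients that are already in place: the lasso OEM map in (\ref{lasso}) is \emph{non-expansive} under (\ref{XcdI}), and, by Theorem~\ref{th:general}, its iterates descend to the set $S$ of stationary points. First I would record three structural facts. (i) Since $\vb^{(0)}$ has grouping coherence, Lemma~\ref{lm2} together with induction shows that every $\vb^{(k)}$ has grouping coherence; equivalently, the whole sequence lies in the closed linear subspace $\mathcal{G}=\{\vb:\va'\vb=0\ \text{for all}\ \va\in\mathcal{E}\ \text{with}\ \m{X}\va=\v{0}\}$, using the characterization of grouping coherence stated just before Lemma~\ref{lm2}. (ii) The objective $l$ in (\ref{l}) for the lasso is convex and, by Assumption~\ref{as:P}, coercive, so its sublevel sets are compact; since $l(\vb^{(k)})$ is non-increasing, $\{\vb^{(k)}\}$ is bounded and hence has at least one limit point. (iii) Repeating verbatim the computation in the proof of Theorem~\ref{th:rate} with $\lambda_2=0$ gives, for all $\vb,\vb'$, the bound $\|\v{M}(\vb)-\v{M}(\vb')\|\leqslant\frac1d\|\m{A}(\vb-\vb')\|\leqslant\frac{d-\gamma_p}{d}\|\vb-\vb'\|\leqslant\|\vb-\vb'\|$, the last step because $\gamma_p\geqslant0$; thus $\v{M}$ is non-expansive, though not a contraction, since $\m{X}$ is singular and $\gamma_p=0$.

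Next I would identify the limit points as fixed points of $\v{M}$. By Theorem~\ref{th:general} every limit point of $\{\vb^{(k)}\}$ lies in $S$, and since $l$ is convex, every element of $S$ is a global minimizer of $l$. For any minimizer $\vb^{**}$, the surrogate $Q(\vb\mid\vb^{**})=\|\v{Y}-\m{X}\vb\|^2+\|\m{\Delta}\vb-\m{\Delta}\vb^{**}\|^2+2\lambda\|\vb\|_1=l(\vb)+\|\m{\Delta}\vb-\m{\Delta}\vb^{**}\|^2$ has quadratic part with Hessian $2(\m{X}'\m{X}+\m{\Delta}'\m{\Delta})=2d\m{I}_p$, which is positive definite, so $Q(\cdot\mid\vb^{**})$ is strictly convex with a unique minimizer; this minimizer is $\vb^{**}$ itself, because both $l(\vb)\geqslant l(\vb^{**})$ and $\|\m{\Delta}\vb-\m{\Delta}\vb^{**}\|^2\geqslant0$ attain their minima at $\vb=\vb^{**}$. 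Hence $\v{M}(\vb^{**})=\vb^{**}$, so every limit point of $\{\vb^{(k)}\}$ is a fixed point of $\v{M}$. Note that this step works for the whole range $d\geqslant\gamma_1$ allowed by (\ref{XcdI}), including $d=\gamma_1$ where $\m{A}=\m{\Delta}'\m{\Delta}$ may be singular, because strict convexity of $Q(\cdot\mid\vb^{**})$ comes from $d\m{I}_p$ and not from $\m{A}$.

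Finally I would run a Fej\'er-monotonicity argument to upgrade subsequential convergence to genuine convergence. Fix a limit point $\vb^{**}$, which by the previous paragraph satisfies $\v{M}(\vb^{**})=\vb^{**}$. Non-expansiveness then gives $\|\vb^{(k+1)}-\vb^{**}\|=\|\v{M}(\vb^{(k)})-\v{M}(\vb^{**})\|\leqslant\|\vb^{(k)}-\vb^{**}\|$, so $\|\vb^{(k)}-\vb^{**}\|$ is non-increasing and therefore converges. Along the subsequence that realizes $\vb^{**}$ as a limit, this quantity tends to $0$, which forces the monotone limit to be $0$; hence $\vb^{(k)}\to\vb^{**}$. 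Since $\mathcal{G}$ is closed and contains every iterate, $\vb^{**}\in\mathcal{G}$, i.e.\ the limit has grouping coherence, completing the proof.

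The main obstacle is this last step. With fully aliased columns the minimizer set of the lasso is a continuum, so neither the uniqueness used in Theorem~\ref{th:convex} nor the discreteness used in Theorem~\ref{th:conv} is available, and non-expansiveness \emph{by itself} does not force the Picard orbit to converge (a rotation is non-expansive yet its orbit never settles down, and there no limit point is a fixed point). What rescues the argument is precisely that the EM descent structure of Theorem~\ref{th:general} guarantees some limit point is genuinely a fixed point of $\v{M}$; Fej\'er monotonicity toward that single fixed point then pins the entire sequence to it.
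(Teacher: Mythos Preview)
Your proof is correct, and it takes a genuinely different route from the paper's. The paper proceeds by \emph{dimension reduction}: it selects a maximal set of non-aliased columns $\m{X}_2$, observes that under grouping coherence the full iterate is determined by its $\vb_2$-block, and then checks that $\{\vb_2^{(k)}\}$ is itself an OEM sequence for a weighted lasso problem $\|\v{Y}-\tilde{\m X}\vt\|^2+2\lambda\|\vt\|_1$ with $\tilde{\m X}$ built from the multiplicities $\omega(j)$. Because $\tilde{\m X}$ has no aliased columns, that reduced objective is strictly convex, so Theorem~\ref{th:convex} applies directly and gives convergence of $\vb_2^{(k)}$, hence of $\vb^{(k)}$.

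Your argument bypasses this reduction entirely. You stay in the original $p$-dimensional problem and combine three facts: the OEM map is non-expansive (re-using the computation from Theorem~\ref{th:rate} with $\lambda_2=0$ and $\gamma_p=0$), every subsequential limit is a global minimizer and hence a fixed point of $\v M$ (via Theorem~\ref{th:general} and strict convexity of $Q(\cdot\mid\vb^{**})$), and Fej\'er monotonicity then upgrades subsequential convergence to full convergence. This is cleaner and more robust: it never uses the combinatorics of aliasing, so the same argument would give convergence of the lasso OEM sequence for \emph{any} singular $\m X$ satisfying (\ref{XcdI}), not just for $\m X$ whose singularity comes from fully aliased columns. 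The paper's approach, by contrast, yields something slightly more explicit: it identifies the limit as the (unique) solution of a concrete reduced lasso problem, at the cost of having to verify that the restricted iterates really are an OEM sequence for that reduced problem.
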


\begin{proof} Partition columns of $\m{X}$ in (\ref{lm}) as $(\m{X}_1\ \m{X}_2)$, where no two columns of $\m{X}_2$ are fully aliased and any
column of $\m{X}_1$ is fully aliased with at least one column of $\m{X}_2$. Let  $J$ denote the number of columns in $\m{X}_1$. Partition $\vb$ as $(\vb_1',\ \vb_2')'$
and $\vb^{(k)}$ as $(\vb_1^{(k)'},\ \vb_2^{(k)'})'$, corresponding to $\m{X}_1$ and $\m{X}_2$, respectively. For $j=1,\ldots,p$, let
$$\omega(j)=\#\{i=1,\ldots,p:|\v{x}_i|=|\v{x}_j|\}.$$ By Lemma \ref{lm2}, for $j=1,\ldots,J$, $\beta_j^{(k)}=\beta_{j'}^{(k)}$ if $\v{x}_j=\v{x}_{j'}$ and
$\beta_j^{(k)}=-\beta_{j'}^{(k)}$ otherwise, where $j'\in\{J+1,\ldots,p\}$. It follows that $\{\vb_2^{(k)}\}$ is an OEM sequence for solving
\begin{equation}\min_{{\scriptsize\vt}}\|\v{y}-\tilde{\m{X}}\vt\|^2+2\sum_{j=1}^{p-J}|\theta_j|,\label{sof}\end{equation}
where $\vt=(\theta_1,\ldots,\theta_{p-J})'$, and the columns of $\tilde{\m{X}}$ are $\omega(J+1)\v{x}_{J+1},\ldots,\omega(p)\v{x}_{p}$. Because the objective function in
(\ref{sof}) is strictly convex, by Theorem \ref{th:convex}, $\{\vb_2^{(k)}\}$ converges to a limit with grouping coherence. This completes the proof.\end{proof}

\section{ACHIEVING THE ORACLE PROPERTY WITH NONCONVEX PENALTIES}\label{sec:op}
\hskip\parindent\vspace{-0.6cm}

Fan and Li (2001) introduced an important concept called the oracle property and showed that there exists one local minimum of the SCAD problem with this property when
$p$ is fixed. The corresponding results with a diverging $p$ were presented in Fan and Peng (2004) and Fan and Lv (2011). Because the optimization problem in (\ref{pls})
with the SCAD penalty has an exponential number of local optima (Huo and Ni 2007; Huo and Chen 2010), no theoretical results in the current literature, as far as we are aware, show that an
existing algorithm can provide such a local minimum. Zou and Li (2008) proposed the local linear approximation (LLA) algorithm to solve the SCAD problem and showed that
the one-step LLA estimator has the oracle property with a good initial estimator for a fixed $p$. The LLA estimator is not guaranteed to be a local minimum of SCAD. In contrast, we prove that the OEM solution to the
SCAD or MCP can achieve this property. Like Fan and Peng (2004) and Fan and Lv (2011), we allow $p$ to depend on $n$, which covers the fixed $p$ case as a special case.

Suppose that the number of nonzero coefficients of $\vb$ in (\ref{lm}) is $p_1$ (with $p_1\leqslant p$) and partition $\vb$ as
\begin{equation}\label{pb}
\vb=(\vb_1',\vb_2')',
\end{equation}
where $\vb_2=\v{0}$ and no component of $\vb_1$ is zero. Divide columns of the regression matrix $\m{X}$ in (\ref{lm}) to $(\m{X}_1\ \m{X}_2)$ with $\m{X}_1$
corresponding to $\vb_1$. A regularized least squares estimator of $\vb$ in (\ref{lm}) has the oracle property if it can not only select the correct submodel
asymptotically, but also estimate the nonzero coefficients $\vb_1$ in (\ref{pb}) as efficiently as if the correct submodel were known in advance. Specifically, an
estimator $\hat{\vb}=(\hat{\vb}_1',\hat{\vb}_2')'$ has this property if $\P(\hat{\vb}_2=\v{0})\to1$ and $\hat{\vb}_1-\vb_1$ follows a normal distribution $N(\v{0},\
\sigma^2(\m{X}_1'\m{X}_1)^{-1})$ asymptotically.

We now consider the oracle property of OEM sequences for SCAD. First we prove that, under certain conditions, a fixed point of the OEM iterations for SCAD can possess the oracle property. Here in after, $p$ depends on $n$ but $p_1$ and $\vb_1$ are
fixed for simplicity. A definition and several assumptions are needed.

\begin{definition} For a series of numbers $c_n\to\infty$ and a positive constant $\kappa$, an estimator $\hat{\vb}$ of $\vb$ is
said to be $c_n$-concentratively consistent of order $\kappa$ to $\vb$ if as $n\to\infty$, (i) $\|\hat{\vb}-\vb\|=O_p(c_n^{-1})$; (ii)
$\P(c_{n}\|\hat{\vb}-\vb\|\geqslant h_n)=O(\exp(-\delta h_n^{\kappa}))$ for any $h_n\to+\infty$, where $\delta>0$ is a constant.\label{def:ini}\end{definition}

\begin{assumption} The random error $\ve$ follows a normal distribution $N(\v{0},\ \sigma^2\m{I}_n)$.
\label{as:norm}\end{assumption}

\begin{assumption} The matrix $\m{X}/\sqrt{n}$ is standardized such that each entry on the diagonal
of $\m{X}'\m{X}/n$ is 1, and $\m{X}'\m{X}/n+\m{\Delta}'\m{\Delta}=d_n\m{I}_p$ with $d_n\geqslant\gamma_1$, where $\gamma_1$ is the largest eigenvalue of $\m{X}'\m{X}/n$.
\label{as:stand}\end{assumption}

In active orthogonalization, $d_n$ in Assumption \ref{as:stand} can take any number equal to or larger than $\gamma_1$.

\begin{assumption} As $n\rightarrow\infty$, $$\frac{\m{X}_1'\m{X}_1}{n}\rightarrow\m{\Sigma}_1,$$
where $\m{\Sigma}_1$ is a $p_1\times p_1$ positive definite matrix.\label{as:Z}\end{assumption}

\begin{assumption} The tuning parameter $\lambda=\lambda_n$ in (\ref{scadP}), $d_n$ in Assumption \ref{as:stand}, and $p$ satisfy the condition that,
as $n\rightarrow\infty$, $\lambda_n/n\rightarrow0$ and
$p\exp\big(-v(c_n\lambda_n/(nd_n))^\kappa\big)\rightarrow\infty$ for any $v>0$.\label{as:lam}\end{assumption}

For a fixed $p$, the OLS estimator is concentratively consistent with $c_n=\sqrt{n}$ and $\kappa=1$ in Definition \ref{def:ini} under Assumption \ref{as:norm}.
Generally, $c_n$ in the above assumptions satisfies $c_n=O(\sqrt{n})$. For example, $c_n=\sqrt{n/\log(p)}$ in the consistency analysis for the lasso (B\"{u}hlmann and
van de Geer 2011). Let $d_n=O(n^{q_1})$. To satisfy Assumption \ref{as:lam}, $q_1$ must be smaller than $1/2$. Note that $d_n\geqslant\gamma_1\geqslant p/n$. Therefore
if we set $p=n^{q}$ for some $q\geqslant0$, $q$ must be smaller than $3/2$. In other words, our results in this section can handle dimensionality of order $p=O(n^q)$ for
$q\in[0,3/2)$. For such a $q_1$, we can take the tuning parameter $\lambda_n\sim n^{q_2}$ to satisfy Assumption \ref{as:lam}, where $q_2\in(q_1+1/2,\ 1)$.

\begin{theorem} Let $\hat{\vb}^{\mathrm{f}}$ be a fixed point of the OEM iterations for SCAD with a fixed $a>2$ in (\ref{scadP}). Suppose that $\hat{\vb}^{\mathrm{f}}$
is $c_n$-concentratively consistent of order $\kappa$ to $\vb$ with $c_n=O(\sqrt{n}d_n)$ and $\kappa\leqslant2$. Under Assumptions \ref{as:norm}-\ref{as:lam}, as
$n\rightarrow\infty$, \\(i) $\P(\hat{\vb}_2^{\mathrm{f}}=\v{0})\rightarrow1$;
\\(ii) $\sqrt{n}(\hat{\vb}_1^{\mathrm{f}}-\vb_1)\rightarrow N(\v{0},\sigma^2\m{\Sigma_1^{-1}})$ in distribution.
\label{th:orlarp}\end{theorem}

The proof of Theorem \ref{th:orlarp} is deferred to the Appendix. This theorem indicates that a fixed point of OEM consistent to the true parameter is an oracle
estimator asymptotically even when $p$ grows faster than $n$. If we do not know whether a fixed point is consistent, with an initial point concentratively consistent to $\vb$, an OEM sequence can converge to that fixed point and possess the oracle property.

Let $\{\vb^{(k)},\ k=0,1,\ldots,\}$ be the OEM sequence from (\ref{scad}) for the SCAD with a fixed $a>2$ in (\ref{scadP}). Let $\eta_n$ be the largest eigenvalue of
$\m{I}_{p_1}-\m{X}_1'\m{X}_1/(nd_n)$. Clearly, $\eta_n\in (0,1)$. We need an assumption on $k=k_n$.

\begin{assumption}As $n\rightarrow\infty$, $d_n\eta_n^{k}\rightarrow0$, $k^3\exp(-v_1c_n^{\kappa})\rightarrow0$,
and $pk^3\exp\big(-v_2(c_n\lambda_n/({n}d_n))^\kappa\big)\rightarrow0$ for any $v_1,v_2>0$. \label{as:k}\end{assumption}

As $n\rightarrow\infty$, $d_n\eta_n^{k}\rightarrow0$ implies $k\to\infty$. In fact, $k$ can grow much faster than $n$. For example, suppose that $c_n=\sqrt{n/\log(n)}$
and $d_n=O(n^{q_1})$, where $q_1\in[0,1/2)$. Take $\lambda_n\sim n^{q_2}$, where $q_2\in(q_1+1/2, 1)$. With $p=O(n^{q})$ for any $q\in[0,3/2)$, one choice for $k$ to
satisfy Assumption \ref{as:k} is $k=\exp(n^{q_3})\ \text{for some}\ q_3\in\big(0,\ \kappa(q_2-q_1-1/2)\big).$

Under the above assumptions, Theorem~\ref{th:orp} shows that $\vb^{(k)}=(\vb^{(k)'}_1,\vb^{(k)'}_2)'$ can achieve the oracle property.

\begin{theorem}If $\vb^{(0)}$ is $c_n$-concentratively consistent of order $\kappa$ to $\vb$ with $c_n=O(\sqrt{n}d_n)$ and $\kappa\leqslant2$. Under Assumptions \ref{as:norm}-\ref{as:k}, as $n\rightarrow\infty$, \\(i)
$\P(\vb_2^{(k)}=\v{0})\rightarrow1$;
\\(ii) $\sqrt{n}(\vb_1^{(k)}-\vb_1)\rightarrow N(\v{0},\sigma^2\m{\Sigma_1^{-1}})$ in distribution.
\label{th:orp}\end{theorem} The proof of Theorem \ref{th:orp} is deferred to the Appendix.

\begin{remark}From (\ref{Ao}) in the proof of Theorem \ref{th:orp}, for any $k=1,2,\ldots$, $\vb^{(k)}$ is \emph{consistent} in variable selection. That is,
$\P(\beta^{(k)}_j\neq0\ \text{for}\ j=1,\ldots,p_1)\rightarrow1$
and $\P(\vb_2^{(k)}=\v{0})\rightarrow1$ as $n\rightarrow\infty$.\end{remark}

\begin{remark}
The proof of Theorem \ref{th:orp} uses the convergence rates of $\P(A_k)$ and $P(B_k)$. If an OEM sequence satisfies the condition that $\beta_j^{(k+1)}=0$ when
$|u_j|<\lambda$ and $\beta_j^{(k+1)}=u_j/d$ when $|u_j|>c\lambda$ for some positive constant $c$, then $\P(A_{k+1})=\P(|u_j|<\lambda)$ and
$\P(B_{k+1})=\P(|u_j|>c\lambda)$. Since an OEM sequence for MCP satisfies the above condition, an argument similar to the proof in the Appendix shows that the
convergence rates of $\P(A_k)$ and $\P(B_k)$ for MCP are the same as those with the SCAD. Thus, under Assumption \ref{as:norm}-\ref{as:k}, Theorem~\ref{th:orp} holds for
MCP with a fixed $a>1$ in (\ref{mcpP}).
\end{remark}

\begin{remark}
With minor modifications, Theorem \ref{th:orlarp} and \ref{th:orp} can allow $p_1$ to tend to infinity at a relatively low rate. They also hold if the normality
condition $\ve\sim N(\v{0}_n,\ \sigma^2\m{I}_n)$ is replaced by weaken conditions such as the sub-Gaussian condition (see e.g. Zhang 2010).
\end{remark}

Theorem \ref{th:orlarp} and \ref{th:orp} can handle dimensionality of order $p=O(n^q)$ for $q<3/2$. For $p$ exceeding this order, penalized regression methods
can perform poorly. A practical approach is a two-stage procedure like that in Fan and Lv (2008). The first stage uses an efficient
screening method to reduce the dimensionality. OEM can be used in the second stage to obtain a SCAD estimator with the oracle property.

The initial point in OEM for nonconvex penalties can be chosen as the OLS estimator if $p<n$. Otherwise, the lasso estimator, which is consistent under certain
conditions (Meinshausen and Yu 2009; B\"{u}hlmann and van de Geer 2011), can be used as the initial point.

Huo and Chen (2010) showed that, for the SCAD penalty, solving the global minimum of the SCAD problem leads to an NP-hard problem. Theorem \ref{th:orp} indicates that as
far as the oracle property is concerned, the local solution given by OEM will suffice.

\section{NUMERICAL ILLUSTRATIONS FOR SOLVING PENALIZED LEAST SQUARES}\label{sec:nc}

Existing algorithms for solving the regularized least squares problem in (\ref{pls}) include those in Fu (1998), Grandvalet (1998), Osborne, Presnell, and Turlach (2000), the LARS algorithm in Efron, Hastie, Johnstone, and
Tibshirani (2004) and the coordinate descent (CD) algorithm (Tseng 2001; Friedman, Hastie, Hofling and Tibshirani 2007; Wu and Lange 2008; Tseng and Yun 2009). The corresponding \texttt{R} packages include \texttt{lars} (Hastie and Efron 2011), \texttt{glmnet} (Friedman, Hastie, and Tibshirani 2011), and \texttt{scout} (Witten and
Tibshirani 2011). For nonconvex penalties like SCAD and MCP, existing algorithms include local quadratic approximation (Fan and
Li 2001; Hunter and Li 2005), local linear approximation (Zou and Li 2008), the CD algorithm (Breheny and Huang 2011; Mazumder, Friedman, and Hastie
2011) and the minimization by iterative soft thresholding algorithm (Schifano, Strawderman, and Wells 2010), among others. Different from these algorithms, OEM
handles each dimension of the iterated vector separably and equally as in (\ref{iols}), and has appealing features such as
grouping coherence in Section \ref{sec:fulali} and the oracle property in Section \ref{sec:op}. Putting these properties aside, one may be interested in numerical comparisons of OEM and other algorithms. Here we compare OEM with the CD and LARS algorithms for regularized least squares.

\subsection{COMPARISONS WITH OTHER ALGORITHMS}\label{subsec:ccd}

\subsubsection{GROUPING COHERENCE}\label{subsubsec:group}

We illustrate grouping coherence of OEM in Section \ref{sec:fulali} with a simulated data set of four predictors, where the variables $X_1$ and
$X_2$ are generated from independent standard normal distributions. The degenerated design matrix is formulated by $X_3 = -X_1$ and $X_4 = -X_2$, where the predictors
consist of two pairs of perfectly negative correlated random variables. The true relationship between the response and predictors is
\begin{eqnarray*}
 y = -X_3 - 2X_4.
\end{eqnarray*}

\begin{figure}[thb]
\begin{center}
 \includegraphics[scale = 0.6]{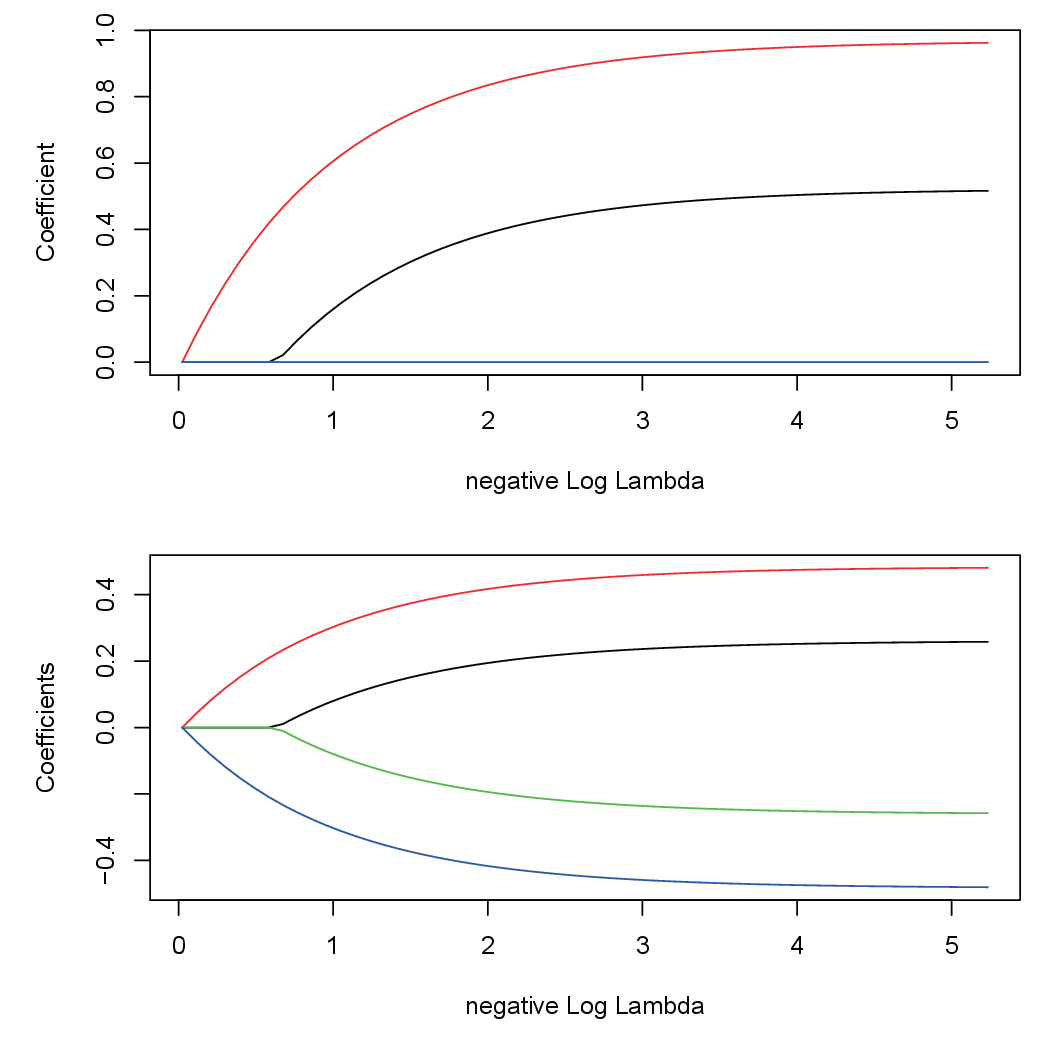}
\caption{Solution paths of the lasso fitted by CD (the upper panel) and OEM (the lower panel) in Section \ref{subsubsec:group}.} \label{path1}
\end{center}
\end{figure}

Figure \ref{path1} displays the solution paths for the data using the lasso fitted by \texttt{R} packages \texttt{glmnet} and \texttt{oem} on the same set of tuning
parameters $\lambda$. The package \texttt{lars} gives the same solution path as \texttt{glmnet}. This figure reveals that OEM estimates the perfectly negative correlated
pairs to have exactly the opposite signs but CD only has $X_1$ and $X_2$ in the model and fixes $X_3$ and $X_4$ to be zero for any $\lambda$. This difference is due to
the fact that in every iteration, both CD and LARS will find the predictor with the largest improvement on the target function and if more than one coordinates can give
better results, only the one with the smallest index will enter the model. In contrast, OEM considers all the predictors in every iteration equally, so the ones with
same contribution to the target will receive equal steps.

\begin{figure}[thb]
\begin{center}
 \includegraphics[scale = 0.6]{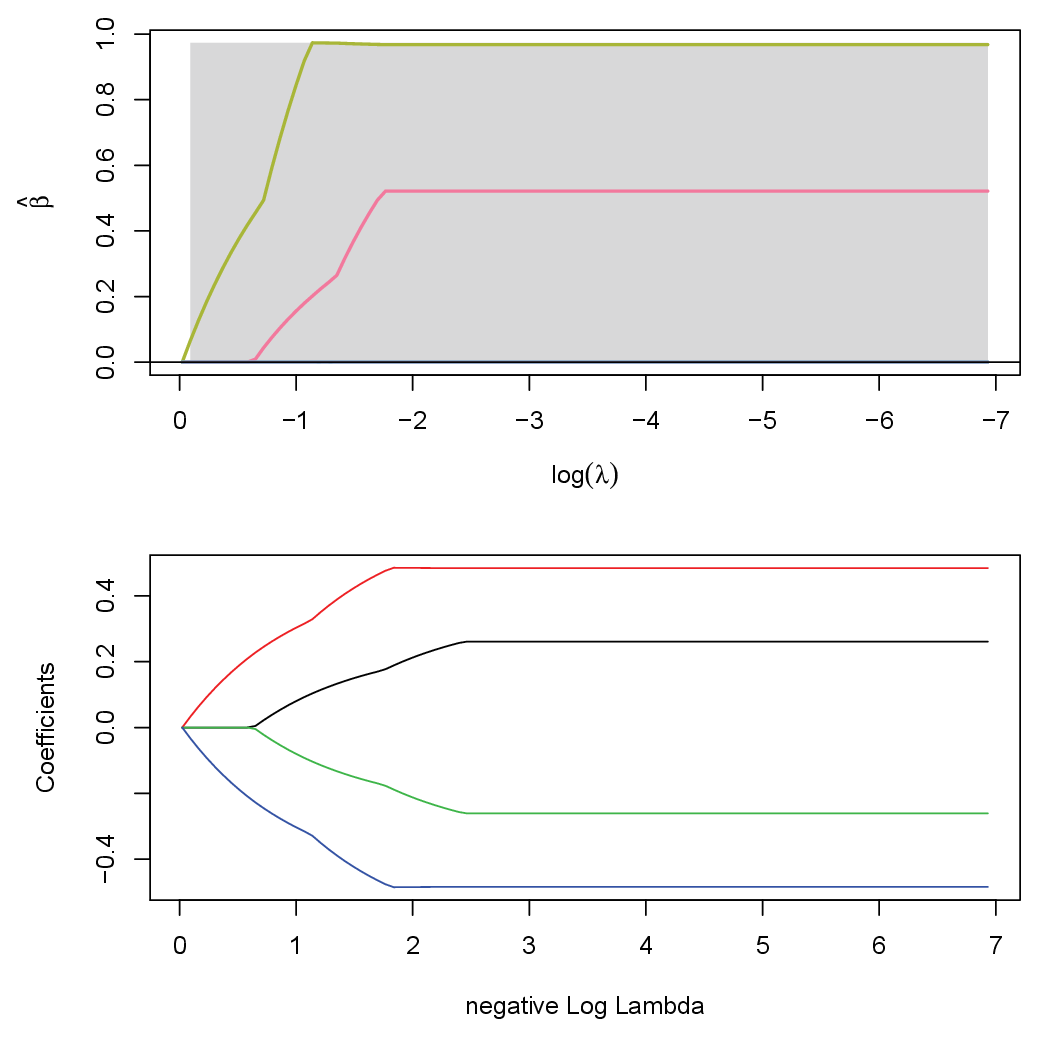}
\caption{Solution paths of SCAD fitted by CD (from package \texttt{ncvreg}) in the upper panel and OEM for the lower panel in Section \ref{subsubsec:group}..} \label{path2}
\end{center}
\end{figure}

Grouping coherence of OEM also holds for non-convex penalties such as SCAD, with the solution paths shown in Figure \ref{path2}, where the same data used above for the lasso is used.

\subsubsection{SPEED}\label{subsubsec:speed}

We now compare the computational efficiency of OEM for regularized least squares problems with the coordinate descent (CD) algorithm, which is considered the fastest
among the current choices. OEM is implemented in \texttt{R} package \texttt{oem} with main code in \texttt{C++}. For fitting the lasso, we compare OEM and the \texttt{R}
package \texttt{glmnet}, which has the main code in \texttt{Fortran} and uses several tricks to speed up. We found that \texttt{glmnet} is faster than \texttt{oem} in most scenarios, but \texttt{oem} has grouping
coherence; see Section \ref{subsubsec:group}. Next, we focus on comparisons of OEM and the package \texttt{ncvreg} developed in \texttt{C} for SCAD and MCP
penalties.

We first consider the situation when the sample size $n$ is larger than the number of variables $p$. Three different covariance matrix structures for the
predictor variables are compared. The first is the case where all the variables are independently generated from standard normal distribution, the second and  third
cases involve design matrices with a correlation structure
\begin{eqnarray}
 \text{Cor}(X_i, X_j) = \rho^{|i - j|}\;\; \text{for}\;i,j = 1, \ldots, p,\label{Xcorr}
\end{eqnarray}
where $\rho = 0.2, 0.8$. The response is generated independent of the design matrix and the true model is $\v{y} = \ve$, where $\ve$ follows the normal
distribution $N(\v{0}, \sigma^2 \m{I}_n)$.

\begin{table}[htb]
\begin{center}
\begin{tabular}{cc||ccc|ccc}
\multirow{2}{*}{$p$} & \multirow{2}{*}{$n$} &\multicolumn{3}{|c}{OEM}&\multicolumn{3}{|c}{CD}\\
\cline{3-8}
&&$\rho = 0$&$\rho = 0.2$&$\rho = 0.8$&$\rho = 0$&$\rho = 0.2$&$\rho = 0.8$\\
\hline
\hline\multirow{3}{*}{ $20$ }& $400$ & $0.0052$ & $0.0059$ & $0.0240$ & $0.0451$ & $0.0245$ & $0.0209$ \\
& $1000$ & $0.0061$ & $0.0073$ & $0.0262$ & $0.0449$ & $0.0516$ & $0.0452$ \\
& $2000$ & $0.0088$ & $0.0099$ & $0.0277$ & $0.0826$ & $0.0927$ & $0.0844$ \\
\hline\multirow{3}{*}{ $50$ }& $1000$ & $0.0189$ & $0.0261$ & $0.1803$ & $0.1398$ & $0.1437$ & $0.1797$ \\
& $2500$ & $0.0311$ & $0.0380$ & $0.1918$ & $0.4483$ & $0.4808$ & $0.4613$ \\
& $5000$ & $0.0609$ & $0.0689$ & $0.2291$ & $0.833$ & $0.9233$ & $0.8912$ \\
\hline\multirow{3}{*}{ $100$ }& $2000$ & $0.0946$ & $0.1193$ & $1.0037$ & $0.8865$ & $0.8612$ & $0.9964$ \\
& $5000$ & $0.1689$ & $0.2085$ & $1.1002$ & $2.2004$ & $2.4043$ & $2.691$ \\
& $10000$ & $0.4551$ & $0.5342$ & $1.2832$ & $4.8513$ & $5.6488$ & $7.7149$
\end{tabular}
\caption{Average runtime (second) comparison between OEM and CD for SCAD when $n$ is larger than $p$} \label{comp1}
\end{center}
\end{table}
To compare the performance of the OEM and CD algorithms for SCAD penalty, data are generated 10 times and the average runtime are given in Table \ref{comp1}.
The table indicates that OEM has advantages when the sample size is significantly larger than the number of variables especially for the independent design. Both
algorithms require more fitting time when the correlations among the covariates increase.
Table \ref{comp2} compares the algorithms with large $p$ small $n$. It turns out that the CD algorithm is faster and the
computational gap gets wider when the ratio of $p/n$ increases. Since regularized least squares methods are usually more efficient after the dimensionality $p$ is reduced from very large to moderate by a
screening procedure (Fan and Lv 2008), a remedy for this drawback is to use OEM after screening the important variables.

\begin{table}[htb]
\begin{center}
\begin{tabular}{cc||ccc|ccc}
\multirow{2}{*}{$p$} & \multirow{2}{*}{$n$} &\multicolumn{3}{|c}{OEM}&\multicolumn{3}{|c}{CD}\\
\cline{3-8}
&&$\rho = 0$&$\rho = 0.2$&$\rho = 0.8$&$\rho = 0$&$\rho = 0.2$&$\rho = 0.8$\\
\hline
\hline\multirow{3}{*}{ $200$ }& $100$ & $0.8425$ & $0.9703$ & $1.2412$ & $0.1430$ & $0.1584$ & $0.1505$ \\
& $200$ & $6.2634$ & $6.784$ & $8.4708$ & $0.4800$ & $0.4728$ & $0.462$ \\
& $400$ & $3.0315$ & $3.2366$ & $6.7653$ & $0.8429$ & $0.8311$ & $1.0044$ \\
\hline\multirow{3}{*}{ $500$ }& $250$ & $4.7629$ & $5.1432$ & $7.0855$ & $1.3622$ & $1.421$ & $1.1643$ \\
& $500$ & $51.338$ & $51.941$ & $57.536$ & $4.4924$ & $4.4082$ & $3.4217$ \\
& $1000$ & $31.070$ & $32.631$ & $54.069$ & $10.175$ & $9.0097$ & $9.8956$ \\
\hline\multirow{3}{*}{ $1000$ }& $500$ & $7.8277$ & $8.6695$ & $23.833$ & $8.5252$ & $8.1363$ & $7.2247$ \\
& $1000$ & $741.54$ & $978.13$ & $1063.7$ & $64.216$ & $67.935$ & $45.511$ \\
& $2000$ & $658.19$ & $676.82$ & $739.18$ & $152.80$ & $129.01$ & $100.25$ \\
\hline\multirow{3}{*}{ $1200$ }
& $100$ & $14.313$ & $12.049$ & $14.722$ & $0.9061$ & $0.8197$ & $0.9102$ \\
& $150$ & $20.443$ & $15.972$ & $18.676$ & $1.8636$ & $1.3811$ & $1.3246$ \\
& $240$ & $24.885$ & $20.313$ & $24.714$ & $3.6128$ & $2.7939$ & $2.5308$
\end{tabular}
\caption{Average runtime (second) comparison between OEM and CD for SCAD for large $p$} \label{comp2}
\end{center}
\end{table}
\subsection{PERFORMANCE COMPARISONS WITH ONE-STEP ESTIMATOR}\label{subsec:pc}

We compare the SCAD solution computed by OEM with Zou and Li (2008)'s one-step LLA estimator. The model used here is
\begin{equation}Y=\sum_{j=1}^p\beta_jX_j+\varepsilon, \label{mfs}\end{equation}where $X_j$'s are generated from (\ref{Xcorr}), $\varepsilon\sim N(0,\sigma^2)$, $p=8$,
and \begin{equation}\vb=(\beta_1,\ldots,\beta_8)'=(3, 1.5, 0, 0, 2, 0, 0, 0)'.\end{equation}The sample size is fixed as 60. We first use the OEM algorithm to compute the
SCAD solution with the initial point being the OLS estimator. The tuning parameter $\lambda$ in (\ref{scadP}) is selected by BIC (Wang, Li, and Tsai 2007). With the same
$\lambda$, we compute the one-step estimator, and compare the variable selection errors (VSEs) and the model errors (MEs) of the two estimators. The VSE and ME of an estimator $\hat{\vb}$ are respectively defined as $$\mathrm{VSE}(\hat{\vb})=|\{j:\ j\in\mathcal{A}(\vb)\ \text{but}\ j\notin\mathcal{A}(\hat{\vb})\}|+|\{j:\
j\in\mathcal{A}(\hat{\vb})\ \text{but}\ j\notin\mathcal{A}(\vb)\}|$$and$$\mathrm{ME}(\hat{\vb})=(\hat{\vb}-\vb)'(\m{X}'\m{X})(\hat{\vb}-\vb)/n,$$where $|\cdot|$ denotes
cardinality and $\mathcal{A}(\vb)=\{j:\ \beta_j\neq0,\ j=1,\ldots,p\}$.

\begin{table}[htb]
\begin{center}
\begin{tabular}{lc||ccc|ccc}
 & \multirow{2}{*}{$\sigma$} &\multicolumn{3}{|c}{OEM}&\multicolumn{3}{|c}{one-step}\\
\cline{3-8}
&&$\rho = 0$&$\rho = 0.5$&$\rho = 0.9$&$\rho = 0$&$\rho = 0.5$&$\rho = 0.9$\\
\hline \hline\multirow{2}{*}{ VSE } & $1$ & 1.487 (1.67) & 1.111 (1.36) & 1.420 (0.67)
      & 1.730 (1.92) & 1.441 (1.73) & 3.550 (1.14) \\
& $3$ & 3.448 (1.12) & 3.060 (1.09) & 3.614 (1.26)
      & 3.408 (1.18) & 3.294 (1.17) & 4.474 (0.96) \\
\hline\multirow{2}{*}{ ME } & $1$ & 0.091 (0.06) & 0.084 (0.05) & 0.076 (0.07)
      & 0.136 (0.09) & 0.123 (0.09) & 0.138 (0.14) \\
& $3$ & 1.043 (0.56) & 1.048 (0.61) & 1.168 (0.63)
      & 1.070 (0.56) & 1.090 (0.60) & 1.207 (0.64) \\
\end{tabular}
\caption{Average VSEs and MEs of OEM and the one-step estimator (standard deviations in parentheses)} \label{c_OEM_ONE}
\end{center}
\end{table}

The average VSE and ME values of the two estimators over 1000 times are given in Table \ref{c_OEM_ONE}. The SCAD estimator computed by OEM outperforms the one-step
estimator in most cases, especially when $\rho$ is large.

\subsection{REAL DATA EXMAPLE}\label{subsec:real}

Consider a dataset from US Census Bureau County and City Data Book 2007. The response is population change in percentage. The covariates include
\begin{enumerate}
 \item Economic variables like income per capita, household income, poverty.
 \item Population distribution like percentages of different races, education levels.
 \item Crime rates like violent crimes and property thefts.
 \item Miscellaneous variables like Republic, Democratic, death and birth rates.
\end{enumerate}
These variables are in percentage of population of the individual counties.

There are $2573$ (counties) observations without missing observations. The linear regression model in (\ref{lm}) is used to fit the data. The solution paths for the
lasso, SCAD and MCP fitted to the data set are given in Figure \ref{fig:path3}. The number of non-zero coefficients, cross validation residual sum of squares, AIC and
BIC are presented in Table \ref{comp5}, where the tuning parameter $\lambda$ is chose by BIC. The selected significant variables include
\begin{itemize}
 \item Percentage of Household income above $750,000$ dollars, which has large positive effect on the percentage of population change.
 \item Social security program beneficiaries. The larger the number of beneficiaries in the program, the higher the population change.
 \item Both the percentages of retired people and under 18 years old have negative effects since they are major sources of migrants leaving the county.
 \item Birth and death rate with positive and negative effects, respectively.
\end{itemize}

The significant variables reveal that the population change is highly related to the living standards of the counties. Table \ref{comp5} compares the fitted models from different regularized least squares
problems. Note that MCP has the most sparse model with little sacrifice of CV error, AIC and BIC scores. LASSO has the model with smallest CV error but including nearly
all the candidate predictors. In the example, the regularized models favor complex models with many nonzero coefficients and this reveals the fact that there are
many factors that have profound influence on population change of counties in the US. In addition, the last two columns of Table \ref{comp5} also give the runtime of
fitting the 10-fold cross-validation to the data, where OEM is implemented in the R package \texttt{oem}, LASSO with CD from \texttt{glmnet}, and SCAD and MCP from
\texttt{ncvreg}.

\begin{figure}[phb]
\begin{center}\scalebox{0.8}[0.4]{\includegraphics{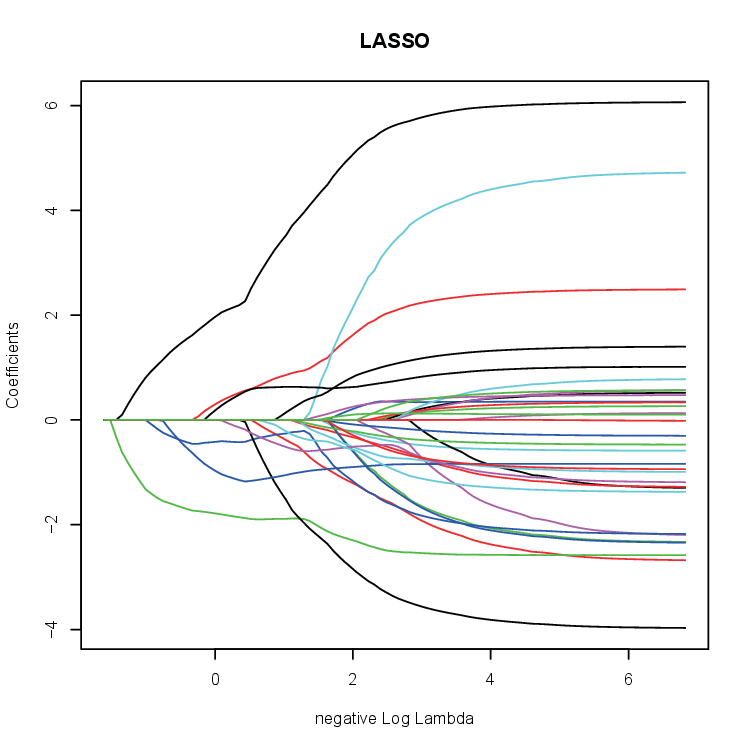}}\\
\scalebox{0.8}[0.4]{\includegraphics{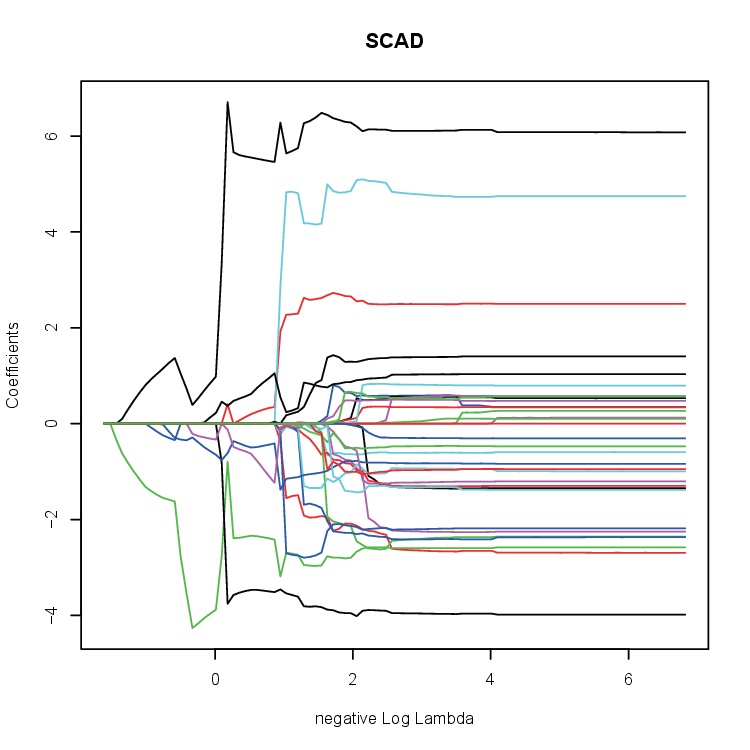}}\\
\scalebox{0.8}[0.4]{\includegraphics{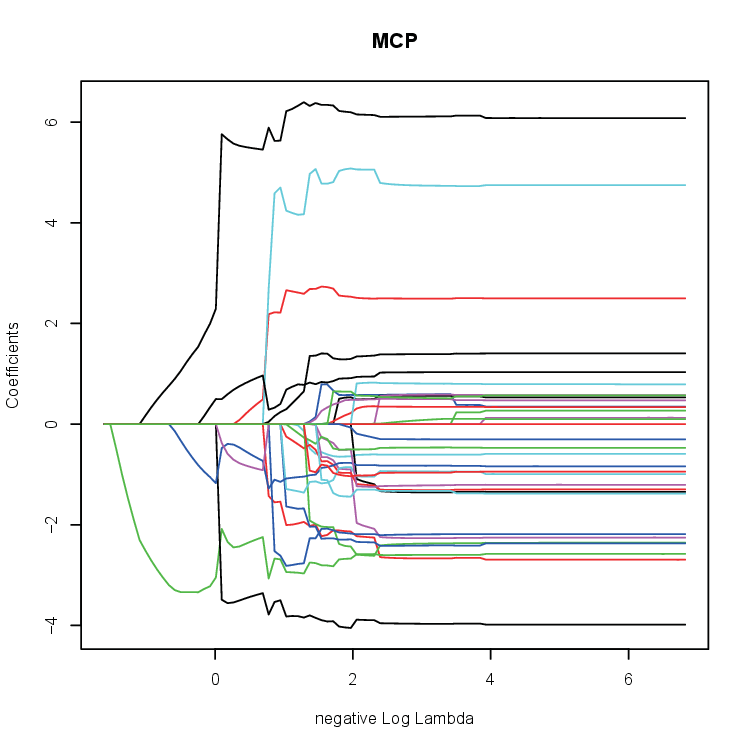}}\end{center} \caption{Solution paths for LASSO, SCAD and MCP for US census bureau data.}\label{fig:path3}
\end{figure}

\begin{table}[htb]
\begin{center}
\begin{tabular}{c||cccc|cc}
\multirow{2}{*}{Penalty}  & \multicolumn{4}{|c}{Final Model} & \multicolumn{2}{|c}{Runtime (s)}\\
& Size & CV error & AIC & BIC & OEM & CD \\
\hline
LASSO & $32$ & $46.93$ & $3.81$ & $3.87$ & $2.097$ & $0.273$\\
SCAD & $28$ & $47.12$ & $3.81$ & $3.87$ & $1.783$ & $3.454$\\
MCP & $23$ & $47.17$ & $3.82$ & $3.88$ & $1.433$ & $3.032$
\end{tabular}
\caption{Lasso, SCAD and MCP results for the U.S. Census Bureau data} \label{comp5}
\end{center}
\end{table}

\section{DISCUSSION}\label{sec:diss}
\hskip\parindent\vspace{-0.6cm}

We have proposed a new algorithm called OEM for solving ordinary and regularized least squares problems with general data structures. OEM has unique theoretical properties, including convergence to the
Moore-Penrose generalized inverse-based least squares estimator for singular regression matrices and convergence to a point having grouping coherence for the lasso, SCAD or MCP. Different from existing algorithms, OEM can provide a local solution with the oracle property for the SCAD and MCP penalties. This suggests a new interface between optimization and statistics for regularized methods.

OEM is very fast for big tall data with $n>p$, such as the data deluge in astronomy, the Internet and marketing (the Economist 2010), large-scale industrial experiments (Xu 2009) and modern simulations in engineering (NAE 2008). For applications for big wide data with $n<p$ like micro-array, OEM is generally slow. We can use a two-stage procedure like that in Fan and Lv (2008) to mitigate this drawback. The first stage uses an efficient screening method to reduce the dimensionality. OEM can be used in the second stage to obtain a SCAD estimator with the oracle property.

An \texttt{R} package for the OEM algorithm has been released. The algorithm can be speeded up by using various methods from the EM literature (McLachlan and Krishnan
2008). For example, following the idea in Varadhan and Roland (2008), one can replace the OEM iteration in (\ref{mQ}) by
$$\vb^{(k+1)}=\vb^{(k)}-2\gamma\v{r}+\gamma^2\v{v},$$where
$\v{r}=\v{M}(\vb^{(k)})-\vb^{(k)},\ \v{v}=\v{M}(\v{M}(\vb^{(k)}))-\v{M}(\vb^{(k)})-r$, and $\gamma=-\|\v{r}\|/\|\v{v}\|$. This scheme is found to lead to significant
reduction of the running time in several examples. For problems with very large $p$, one may consider a hybrid algorithm to combine the OEM and coordinate descent ideas.
It partitions $\vb$ in (\ref{lm}) into $G$ groups and in  each iteration, it minimizes the objective function $l$ in (\ref{l})  by using the OEM algorithm with respect
to one group while holding the other groups fixed. Here are some details. Group $\vb$ as $\vb=(\vb_1',\ldots,\vb_G')'$. For $k=0,1,\ldots$,
solve\begin{equation}\vb_g^{(k+1)}=\arg\min_{{\scriptsize\vb_g}}l(\vb_1^{(k+1)},\ldots,\vb_{g-1}^{(k+1)},\vb_g,\vb_{g+1}^{(k)},\ldots,\vb_G^{(k)}) \ \text{for}\
g=1,\ldots,G\label{h}\end{equation}by OEM until convergence. Note that (\ref{h}) has a much lower dimension than the iteration in (\ref{mQ}). For $G=1$, the hybrid
algorithm reduces to the OEM algorithm and for $G=p$, it becomes the coordinate descent algorithm. Theoretical properties of this hybrid algorithm will be studied and
reported elsewhere.

\vspace{4mm}
\begin{center}{\large APPENDIX: PROOFS OF THEOREM \ref{th:orlarp} AND \ref{th:orp}}\end{center}

Here are additional definitions and notation. Let $\Phi$ be the cumulative distribution function of the standard normal random variable. For $a>2$ and $\lambda$ in
(\ref{scadP}) and $d_n\geqslant\gamma_1$ in Assumption~\ref{as:Z}, define
$$s(u;\lambda)
=\left\{\begin{array}{ll}{\mathrm{sign}}(u)\big(|u|-\lambda\big)_+/d_n,&\text{when}\ |u|\leqslant(d_n+1)\lambda,
\\{\mathrm{sign}}(u)\big\{(a-1)|u|-a\lambda\big\}/\big\{(a-1)d_n-1\big\},\quad&\text{when}\ (d_n+1)\lambda<|u|\leqslant a d_n\lambda,
\\ u/d_n,&\text{when}\ |u|>a d_n\lambda,\end{array}\right.$$
and\begin{equation*}\v{s}(\v{u};\lambda)=\big[s(u_1;\lambda),\ldots,s(u_p;\lambda)\big]. \end{equation*} The OEM sequence from (\ref{scad}) satisfies the condition that
$\vb^{(k+1)}=\v{s}(\v{u}^{(k)};\lambda_n/n)$, where
\begin{eqnarray}\label{u}\v{u}^{(k)}=(\v{u}^{(k)'}_1,\v{u}^{(k)'}_2)'=\frac{\m{X}'\v{y}}{n}+\left(d_n\m{I}_p-\frac{\m{X}'\m{X}}{n}\right)\vb^{(k)}.\end{eqnarray}
For $k=1,2,\ldots$, define two sequences of events $A_k=\{\vb_2^{(k)}=\v{0}\}$ and $B_k=\{\vb_1^{(k)}=\v{u}_1^{(k-1)}/d_n\}$. Without loss of generality, assume
$\sigma^2=1$ in (\ref{lm}).

\vspace{2mm} \noindent\emph{Proof of Theorem \ref{th:orlarp}}. Since $\hat{\vb}^{\mathrm{f}}$ is a fixed point, $\hat{\vb}^{\mathrm{f}}=\v{s}(\hat{\v{u}};\lambda_n/n)$,
where $\hat{\v{u}}=(\hat{u}_1,\ldots,\hat{u}_p)'={\m{X}'\v{y}}/{n}+\left(d_n\m{I}_p-{\m{X}'\m{X}}/{n}\right)\hat{\vb}^{\mathrm{f}}$. Therefore,
\begin{eqnarray}\label{uh}\frac{\hat{\v{u}}}{d_n}=\vb+\frac{\m{X}'\ve}{nd_n}+\left(\m{I}_p-\frac{\m{X}'\m{X}}{nd_n}\right)(\hat{\vb}^{\mathrm{f}}-\vb).\end{eqnarray}
Thus,
\begin{eqnarray}&&\P(\hat{\vb}_1^{\mathrm{f}}=\hat{\v{u}}_1/d_n,\ \hat{\vb}_2^{\mathrm{f}}=\v{0})\nonumber
\\&=&\P\left(|\hat{u}_j|>ad_n\lambda_n/n\ \text{for}\ j=1,\ldots,p_1,\ |\hat{u}_j|<\lambda_n/n\ \text{for}\ j=p_1+1,\ldots,p\right)\nonumber
\\&\geqslant&1-\sum_{j=1}^{p_1}\P(|\hat{u}_j|\leqslant ad_n\lambda_n/n)-\sum_{j=p_1+1}^{p}\P(|\hat{u}_j|\geqslant \lambda_n/n).\label{p2}\end{eqnarray}By (\ref{uh})
and the fact that $\hat{\vb}_2^{\mathrm{f}}$ is concentratively consistent to $\vb$, for $j=1,\ldots,p_1$,
$\hat{u}_j/d_n=\beta_j+O_p\big((\sqrt{n}d_n)^{-1}\big)+O_p(1/c_n)=\beta_j+o_p(1)$. By $\lambda_n/n\to0$ in Assumption \ref{as:lam}, $\P(|\hat{u}_j|\leqslant
ad_n\lambda_n/n)\to0$. For the other part in (\ref{p2}),
\begin{eqnarray*}&&\sum_{j=p_1+1}^{p}\P(|\hat{u}_j|\geqslant \lambda_n/n)\\&\leqslant&\sum_{j=p_1+1}^{p}\P\left(|\v{x}_i'\ve/\sqrt{n}|\geqslant
\lambda_n/\sqrt{n}-\sqrt{n}d_n\big\|\left(\m{I}_p-(\m{X}'\m{X})/(nd_n)\right)(\hat{\vb}^{\mathrm{f}}-\vb)\big\|\right)
\\&\leqslant&2p\,\big[1-\Phi\big(\lambda_n/(2\sqrt{n})\big)\big]+p\,\P\big(c_n\|\hat{\vb}^{\mathrm{f}}-\vb\|\geqslant \lambda_nc_n/(2{n}d_n)\big)
\\&=&o\left(p\exp\left[-(\lambda_n/\sqrt{n})^2/8\right]\right)+O\left(p\exp\left(-\delta(c_n\lambda_n/(nd_n))^{\kappa}\right)\right).\end{eqnarray*}
By Assumption \ref{as:lam}, \begin{eqnarray}\P(\hat{\vb}_1^{\mathrm{f}}=\hat{\v{u}}_1/d_n,\ \hat{\vb}_2^{\mathrm{f}}=\v{0})\to1,\label{pab}\end{eqnarray} and (i) is
proved.

Now consider (ii). Note that when $\hat{\vb}_1^{\mathrm{f}}=\hat{\v{u}}_1/d_n$ and $\hat{\vb}_2^{\mathrm{f}}=\v{0}$,
$$\hat{\vb}_1^{\mathrm{f}}=\frac{\hat{\v{u}}_1}{d_n}=\vb_1+\frac{\m{X}_1'\ve}{nd_n}+\left(\m{I}_{p_1}
-\frac{\m{X}_1'\m{X}_1}{nd_n}\right)(\hat{\vb}_1^{\mathrm{f}}-\vb_1),$$ which implies that $$\m{X}_1'\m{X}_1(\hat{\vb}_1^{\mathrm{f}}-\vb_1)=\m{X}_1'\ve\sim N(\v{0},\
\sigma^2\m{X}_1'\m{X}_1).$$By (\ref{pab}) and Assumption \ref{as:Z}, the proof of (ii) is
completed.$\quad\quad\quad\quad\quad\quad\quad\quad\quad\quad\quad\quad\quad\quad\square$

To prove Theorem \ref{th:orp}, we need several lemmas.

\begin{lemma} For $k=1,2,\ldots$, if $A_k$ occurs, then
\begin{eqnarray}&&\v{u}_1^{(k)}=d_n\vb_1^{(k)}+\frac{\m{X}_1'\m{X}_1}{n}[\vb_1-\vb_1^{(k)}]+\frac{\m{X}_1'\ve}{n},\label{u1}
\\&&\v{u}_2^{(k)}=\frac{\m{X}_2'\m{X}_1}{n}[\vb_1-\vb_1^{(k)}]+\frac{\m{X}_2'\ve}{n}.\label{u2}\end{eqnarray}\label{lemma: A}\end{lemma}

\vspace{-6mm} \noindent\emph{Proof}. If $A_{k}$ occurs, then by (\ref{u}),\begin{eqnarray*}\v{u}^{(k)}&=&\frac{\m{X}'\m{X}\vb}{n}+\frac{\m{X}'\ve}{n}
+\left(\begin{array}{cc}d_n\m{I}_{p_1}-\m{X}_1'\m{X}_1/n&-\m{X}_1'\m{X}_2/n\\-\m{X}_2'\m{X}_1/n&d_n\m{I}_{p-p_1}-\m{X}_2'\m{X}_2/n\end{array}\right)
\left(\begin{array}{c}\vb_1^{(k)}\\\v{0}\end{array}\right),\end{eqnarray*}which implies the lemma.
$\quad\quad\quad\quad\quad\quad\quad\quad\quad\quad\quad\quad\quad\quad\square$

\begin{lemma} Suppose that Assumption \ref{as:Z} holds. For $k=1,2,\ldots$, if $A_1,\ldots,A_{k-1}, B_1,\ldots,B_k$ all occur, then for sufficiently large $n$,
$$\|\vb_1^{(k)}-\vb_1\|\leqslant\|\vb_1^{(0)}-\vb_1\|
+C\|\m{X}_1'\ve\|/n,$$where $C>0$ is a constant.\label{lemma: b}\end{lemma}

\noindent\emph{Proof}. If $B_1$ occurs, by (\ref{u}),
$\vb_1^{(1)}=\v{u}_1^{(0)}/d_n=\m{X}_1'\m{X}_1\vb_1/(nd_n)+\m{X}_1'\ve/(nd_n)+\big(\m{I}_{p_1}-\m{X}_1'\m{X}_1/(nd_n)\big)\vb_1^{(0)}-\m{X}_1'\m{X}_2\vb_2^{(0)}/(nd_n)$,
which implies
\begin{eqnarray*}\|\vb_1^{(1)}-\vb_1\|&=&\left\|\left(\m{I}_{p_1}-\frac{\m{X}_1'\m{X}_1}{nd_n}\right)(\vb_1^{(0)}-\vb_1)-\m{X}_1'\m{X}_2\vb_2^{(0)}/(nd_n)
+\m{X}_1'\ve/(nd_n)\right\|
\\&\leqslant&\left\|\left(\m{I}_{p_1}-\frac{\m{X}_1'\m{X}_1}{nd_n}\right)(\vb_1^{(0)}-\vb_1)-\frac{\m{X}_1'\m{X}_2}{nd_n}(\vb_2^{(0)}-\vb_2)
\right\|+\|\m{X}_1'\ve\|/(nd_n)
\\&\leqslant&\left\|\left(\m{I}_p-\frac{\m{X}'\m{X}}{nd_n}\right)(\vb^{(0)}-\vb)\right\|+\|\m{X}_1'\ve\|/(nd_n)
\\&\leqslant&\|\vb^{(0)}-\vb\|+\|\m{X}_1'\ve\|/(nd_n).\end{eqnarray*}If $A_1$, $B_1$, and $B_2$ all occur, by Lemma \ref{lemma: A},
we have $\vb_1^{(2)}=\v{u}_1^{(1)}/d_n=\vb_1^{(1)} +\m{X}_1'\m{X}_1(\vb_1-\vb_1^{(1)})/(nd_n)+\m{X}_1'\ve/(nd_n)$. Therefore,
\begin{eqnarray*}\|\vb_1^{(2)}-\vb_1\|&=&\left\|\left(\m{I}_{p_1}-\frac{\m{X}_1'\m{X}_1}{nd_n}\right)(\vb_1^{(1)}-\vb_1)+\m{X}_1'\ve/(nd_n)\right\|
\\&\leqslant&\eta_n\|\vb^{(1)}-\vb\|+\|\m{X}_1'\ve\|/(nd_n)
\\&\leqslant&\eta_n\|\vb_1^{(0)}-\vb_1\|+(1+\eta_n)\|\m{X}_1'\ve\|/(nd_n).\end{eqnarray*}Similarly, if $A_1, A_2, B_1, B_2$, and $B_3$ all occur,
we can obtain
$$\|\vb_1^{(3)}-\vb_1\|\leqslant\eta_n^2\|\vb_1^{(0)}-\vb_1\|+(1+\eta_n+\eta_n^2)\|\m{X}_1'\ve\|/(nd_n).$$
By recursion, if $A_1,\ldots,A_{k-1}, B_1,\ldots,B_k$ all occur, we have\begin{eqnarray*}\|\vb_1^{(k)}-\vb_1\|&\leqslant&\eta_n^{k-1}\|\vb_1^{(0)}-\vb_1\|
+\frac{1-\eta_n^k}{1-\eta_n}\cdot\frac{\|\m{X}_1'\ve\|}{nd_n}\\&\leqslant&\|\vb_1^{(0)}-\vb_1\| +\frac{\|\m{X}_1'\ve\|}{n(1-\eta_n)d_n}.\end{eqnarray*}This proof can be
completed by noting $(1-\eta_n)d_n$ tends to the smallest eigenvalue of $\m{\Sigma}_1$ as $n\to \infty$ from Assumption \ref{as:Z}.
$\quad\quad\quad\quad\quad\quad\quad\quad\quad\quad\quad\quad\quad\quad\square$

\begin{lemma} For $k=1,2,\ldots$, if $A_1,\ldots,A_k, B_1,\ldots,B_{k+1}$ all occur, then
$$\frac{\m{X}_1'\m{X}_1}{n}(\vb_1^{(k)}-\vb_1)=\frac{\m{X}_1'\ve}{n}+O_p(d_n\eta_n^{k}/\sqrt{n}).$$\label{lemma: Op}\end{lemma}

\vspace{-6mm} \noindent\emph{Proof}. If $A_k$ and $B_{k+1}$ both occur, by Lemma \ref{lemma: A},
$\vb_1^{(k+1)}=\vb_1^{(k)}+\m{X}_1'\m{X}_1(\vb_1-\vb_1^{(k)})/(nd_n)+\m{X}_1'\ve/(nd_n)$, which implies
\begin{eqnarray}\frac{\m{X}_1'\m{X}_1}{nd_n}(\vb_1^{(k)}-\vb_1)=\frac{\m{X}_1'\ve}{nd_n}+d_n(\vb_1^{(k)}-\vb_1^{(k+1)}).\label{e1}\end{eqnarray}
Similarly, if $A_{k-1}$ and $B_{k}$ both occur, we
have\begin{eqnarray}\frac{\m{X}_1'\m{X}_1}{nd_n}(\vb_1^{(k-1)}-\vb_1)=\frac{\m{X}_1'\ve}{nd_n}+d_n(\vb_1^{(k-1)}-\vb_1^{(k)}).\label{e2}\end{eqnarray} Combining
(\ref{e1}) and (\ref{e2}) gives
\begin{eqnarray*}\|\vb_1^{(k+1)}-\vb_1^{(k)}\|&=&\left\|\left(\m{I}_{p_1}-\frac{\m{X}_1'\m{X}_1}{nd_n}\right)(\vb_1^{(k)}-\vb_1^{(k-1)})\right\|
\\&\leqslant&\eta_n\|\vb_1^{(k)}-\vb_1^{(k-1)}\|.\end{eqnarray*}
By recursion and Lemma \ref{lemma: b}, if $A_1,\ldots,A_k, B_1,\ldots,B_{k+1}$ all occur, we have
\begin{eqnarray}\|\vb_1^{(k+1)}-\vb_1^{(k)}\|&\leqslant&\eta_n^k\|\vb_1^{(1)}-\vb_1^{(0)}\|\nonumber
\\&\leqslant&\eta_n^k(\|\vb_1^{(1)}-\vb_1\|+\|\vb_1^{(0)}-\vb_1\|)\nonumber\\&=&O_p(\eta_n^{k}/\sqrt{n})\label{Op}\end{eqnarray}
This lemma follows from (\ref{e1}) and (\ref{Op}). $\quad\quad\quad\quad\quad\quad\quad\quad\quad\quad\quad\quad\quad\quad\square$

\vspace{2mm} \noindent\emph{Proof of Theorem \ref{th:orp}}. By Lemma \ref{lemma: Op} and Assumption \ref{as:k}, it suffices to prove $\P\big((\cap_{i=1}^k
A_i)\cap(\cap_{i=1}^{k+1} B_i)\big)\rightarrow1$. In what follows, $C_1, C_2,\ldots$ all denote positive constants. For all $k=0,1,\ldots,$
$\v{u}^{(k)}=\m{X}'\ve/n+(d_n\m{I}_p-\m{X}'\m{X}/n)(\vb^{(k)}-\vb)+d_n\vb$. We have
\begin{eqnarray}|u_j^{(k)}|\geqslant d_n|\beta_j|-|\v{x}_j'\ve|/n-\|(d_n\m{I}_p-\m{X}'\m{X}/n)(\vb^{(k)}-\vb)\|\ \text{for}\ j=1,\ldots,p_1\label{ul}\end{eqnarray}and
\begin{eqnarray}|u_j^{(k)}|\leqslant|\v{x}_j'\ve|/n+\|(d_n\m{I}_p-\m{X}'\m{X}/n)(\vb^{(k)}-\vb)\|\ \text{for}\ j=p_1+1,\ldots,p.\label{us}\end{eqnarray}

First consider $A_1$ and $B_1$. By (\ref{ul}) and Assumption \ref{as:lam}, for $j=1,\ldots,p_1$,
\begin{eqnarray*}&&\P(|u_j^{(0)}|\leqslant ad_n\lambda_n/n)
\\&\leqslant&\P(|\v{x}_j'\ve|/(nd_n)\geqslant|\beta_j|/2-(a\lambda_n)/n)+\P(\|(\m{I}_p-\m{X}'\m{X}/(nd_n))(\vb^{(0)}-\vb)\|\geqslant|\beta_j|/2)
\\&\leqslant&2\big(1-\Phi(\sqrt{n}d_n|\beta_j|/2-ad_n\lambda_n/\sqrt{n})\big)+\P(c_{n}\|\vb^{(0)}-\vb\|\geqslant c_{n}|\beta_j|/2)
\\&\leqslant&C_1\exp(-C_2c_n^{\kappa}),\end{eqnarray*}which implies
\begin{eqnarray}\P(B_1)&=&\P(|u_j^{(0)}|>ad_n\lambda_n/n\ \text{for}\ j=1,\ldots,p_1)\nonumber
\\&\geqslant&1-\sum_{j=1}^{p_1}\P(|u_j^{(0)}|\leqslant ad_n\lambda_n/n)\nonumber
\\&\geqslant&1-C_3\exp(-C_2c_n^{\kappa}).\label{B1u}\end{eqnarray}By (\ref{us}), for $j=p_1+1,\ldots,p$,\begin{eqnarray*}&&\P(|u_j^{(0)}|>\lambda_n/n)
\\&\leqslant&\P(|\v{x}_j'\ve|/n>\lambda_n/(2n))+\P(\|(\m{I}_p-\m{X}'\m{X}/(nd_n))(\vb^{(0)}-\vb)\|>\lambda_n/(2nd_n))
\\&\leqslant&2\big(1-\Phi(\lambda_n/(2\sqrt{n}\,))\big)+\P(c_{n}\|\vb^{(0)}-\vb\|>c_n\lambda_n/(2{n}d_n))
\\&\leqslant&C_4\exp(-C_5(c_n\lambda_n/({n}d_n))^\kappa),\end{eqnarray*}which implies
\begin{eqnarray}\P(A_1)&=&\P(|u_j^{(0)}|\leqslant\lambda_n/n\ \text{for}\ j=p_1+1,\ldots,p)\nonumber
\\&\geqslant&1-\sum_{j=p_1+1}^p\P(|u_j^{(0)}|>\lambda_n/n)\nonumber\\&\geqslant&1-pC_4\exp(-C_5(c_n\lambda_n/({n}d_n))^\kappa).\label{A1u}\end{eqnarray}

Next consider $A_k$ and $B_k$ for $k>1$. By (\ref{ul}) and Lemma \ref{lemma: b},
\begin{eqnarray*}&&\P\left(\cup_{j=1}^{p_1}\{|u_j^{(k-1)}|\leqslant ad_n\lambda_n/n\}\right)
\\&\leqslant&\P\left(\cup_{j=1}^{p_1}\{|u_j^{(k-1)}|\leqslant ad_n\lambda_n/n\}\cap\{\cap_{i=1}^{k-2}A_i\}\cap\{\cap_{i=1}^{k-1} B_i\}\right)
\\&&+\,\sum_{i=1}^{k-2}[1-\P(A_i)]+\sum_{i=1}^{k-1}[1-\P(B_i)]
\\&\leqslant&\sum_{j=1}^{p_1}\P\Big(\{|\beta_j|-|\v{x}_j'\ve|/(nd_n)-\|(\m{I}_p-\m{X}'\m{X}/(nd_n))(\vb^{(k-1)}-\vb)\|
\leqslant a\lambda_n/n\}
\\&&\quad\quad\quad\cap\{\cap_{i=1}^{k-2}A_i\}\cap\{\cap_{i=1}^{k-1} B_i\}\Big)+\sum_{i=1}^{k-2}[1-\P(A_i)]+\sum_{i=1}^{k-1}[1-\P(B_i)]
\\&\leqslant&\sum_{j=1}^{p_1}\P(|\v{x}_j'\ve|/(nd_n)\geqslant|\beta_j|/2-a\lambda_n/n)+p_1\P(\|\vb_1^{(0)}-\vb_1\|+C\|\m{X}_1'\ve\|/n\geqslant|\beta_j|/2)
\\&&+\,\sum_{i=1}^{k-2}[1-\P(A_i)]+\sum_{i=1}^{k-1}[1-\P(B_i)]
\\&\leqslant&2p_1\big(1-\Phi(\sqrt{n}d_n|\beta_j|/2-ad_n\lambda_n/\sqrt{n})\big)+p_1\P(\|\vb_1^{(0)}-\vb_1\|\geqslant|\beta_j|/4)
\\&&+\,p_1\P(C\|\m{X}_1'\ve\|/n\geqslant|\beta_j|/4)+\sum_{i=1}^{k-2}[1-\P(A_i)]+\sum_{i=1}^{k-1}[1-\P(B_i)]
\\&\leqslant& p_1C_6\exp(-C_7c_n^{\kappa})+ \sum_{i=1}^{k-2}[1-\P(A_i)]+\sum_{i=1}^{k-1}[1-\P(B_i)], \end{eqnarray*}
which implies \begin{eqnarray}\P(B_k)\geqslant& 1-p_1C_6\exp(-C_7c_n^{\kappa})-\sum_{i=1}^{k-2}[1-\P(A_i)]-\sum_{i=1}^{k-1}[1-\P(B_i)].\label{Bit}\end{eqnarray}
Similarly, we can obtain
\begin{eqnarray}\P(A_k)\geqslant1-pC_{8}\exp(-C_{9}(c_n\lambda_n/({n}d_n))^\kappa)-\sum_{i=1}^{k-2}[1-\P(A_i)]-\sum_{i=1}^{k-1}[1-\P(B_i)].\label{Ait}\end{eqnarray}
By recursion from (\ref{B1u}), (\ref{A1u}), (\ref{Bit}), and (\ref{Ait}), we have
\begin{eqnarray}\P(B_k)\geqslant1-k^2C_{10}\exp(-C_{7}c_n^{\kappa})-pk^2C_{11}\exp(-C_{9}(c_n\lambda_n/({n}d_n))^\kappa),\label{Bo}\end{eqnarray}and
\begin{eqnarray}\P(A_k)\geqslant1-k^2C_{12}\exp(-C_{7}c_n^{\kappa})-pk^2C_{13}\exp(-C_{9}(c_n\lambda_n/({n}d_n))^\kappa).\label{Ao}\end{eqnarray}
By (\ref{Bo}) and (\ref{Ao}), $$\P\big((\cap_{i=1}^k A_i)\cap(\cap_{i=1}^{k+1}
B_i)\big)\geqslant1-k^3C_{14}\exp(-C_{7}c_n^{\kappa})-pk^3C_{15}\exp(-C_{9}(c_n\lambda_n/({n}d_n))^\kappa).$$By Assumption \ref{as:k}, we complete this proof.
$\quad\quad\quad\quad\quad\quad\quad\quad\quad\quad\quad\quad\quad\quad\square$

\vspace{10mm} \centerline{\bf ACKNOWLEDGEMENTS} Xiong is partially supported by grant 11271355 of the National Natural Science Foundation of China. Dai is partially support by
Grace Wahba through NIH grant R01 EY009946, ONR grant N00014-09-1-0655 and NSF grant DMS-0906818. Qian is partially supported by NSF grant DMS 1055214. The authors thank Jin Tian for useful discussions, and thank
Xiao-Li Meng, Grace Wahba, the editor, associate editor, and two referees for their comments and suggestions which lead to improvements in the article.

\vspace{1cm}

\begin{center}{\large REFERENCES}\end{center}

\begin{description}

\item {}
Ben-Israel, A. and Greville, T. N. E. (2003), \textit{Generalized Inverses, Theory and Applications}, 2nd ed., New York: Springer.


\item{}
Bondell, H. D. and Reich, B. J. (2008), ``Simultaneous Regression Shrinkage, Variable Selection and Clustering of Predictors With Oscar," \textit{Biometrics} 64,
115--123.

\item{}
Breiman, L. (1995), ``Better Subset Regression Using the Nonnegative Garrote," \textit{Technometrics}, 37, 373--384.

\item{}
Breheny, P. and Huang, J. (2011), ``Coordinate Descent Algorithms for Nonconvex Penalized Regression, With Applications to Biological Feature Selection," \textit{The
Annals of Applied Statistics}, 5, 232--253.

\item{}
B\"{u}hlmann, P. and van de Geer, S. (2011). \textit{Statistics for High-Dimensional Data: Methods, Theory and Applications}, Berlin: Springer.






\item{}
Dempster, A. P., Laird, N. M. and Rubin, D. B. (1977),
``Maximum Likelihood from Incomplete Data via the EM Algorithm," \textit{Journal of the Royal Statistical Society, Ser. B}, 39, 1--38.


\item{}
Efron, B., Hastie, T., Johnstone, I. and Tibshirani, R. (2004), ``Least Angle Regression,"
\textit{The Annals of Statistics}, 32, 407--451.


\item{}
Fan, J. and Li, R. (2001), ``Variable Selection via Nonconcave Penalized Likelihood and Its Oracle Properties,"
\textit{Journal of the American Statistical Association}, 96, 1348--1360.

\item{}
Fan, J. and Lv, J. (2008)
``Sure Independence Screening for Ultrahigh Dimensional Feature Space (with discussion),"
\textit{Journal of the Royal Statistical Society, Ser. B}, \textbf{70}, 849--911.

\item{}
Fan, J. and Lv, J. (2011). ``Properties of Non-concave Penalized Likelihood with NP-dimensionality," \textit{Information Theory, IEEE Transactions}, 57, 5467--5484.

\item{}
Fan, J. and Peng, H. (2004), ``Non-concave Penalized Likelihood With Diverging Number of Parameters," \textit{The Annals of Statistics}, 32, 928--961.

\item{}
Frank, L. E. and Friedman, J. (1993), ``A Statistical View of Some Chemometrics Regression Tools," \textit{Technometrics}, \textbf{35}, 109--135.

\item{}
Friedman, J., Hastie, T., Hofling, H. and Tibshirani, R. (2007), ``Pathwise Coordinate Optimization," \textit{The Annals of Applied Statistics}, 1, 302--332.

\item{}
Friedman, J., Hastie, T. and Tibshirani, R. (2009), ``Regularization Paths for Generalized Linear Models via Coordinate Descent,"
\textit{Journal of Statistical Software}, 33, 1--22.

\item{}
Friedman, J., Hastie, T. and Tibshirani, R. (2011), ``Glmnet," R package.

\item{}
Fu, W. J. (1998), ``Penalized Regressions: The Bridge versus the LASSO," \textit{Journal of Computational and Graphical Statistics}, 7, 397--416.

\item {}
Golub, G. H. and Van Loan, C. F. (1996), \textit{Matrix computations}, 3rd ed., Baltimore: The Johns Hopkins University Press.

\item{}
Grandvalet, Y. (1998), ``Least Absolute Shrinkage is Equivalent to Quadratic Penalization," \textit{In: Niklasson, L., Bod\'{e}n, M., Ziemske, T. (eds.), ICANN'98. Vol.
1 of Perspectives in Neural Computing}, Springer, 201--206.

\item{}
Green, P. J. (1990), ``On Use of the EM Algorithm for Penalized Likelihood Estimation," \textit{Journal of the Royal Statistical Society, Ser. B}, 52, 443--452.

\item{}
Hastie, T. and Efron, B. (2011), ``Lars," R package.

\item{}
Healy, M. J. R. and Westmacott, M. H. (1956), ``Missing Values in Experiments Analysed on Automatic Computers,"
\textit{Journal of the Royal Statistical Society, Ser. C}, 5, 203--206.

\item{}
Hoerl, A. E. and Kennard, R. W. (1970), ``Ridge Regression: Biased Estimation for Nonorthogonal Problems," \textit{Technometrics}, 12, 55--67.



\item{}
Hunter, D. R. and Li, R. (2005), ``Variable Selection Using MM Algorithms," \textit{The Annals of Statistics}, 33, 1617--1642.

\item{}
Huo, X. and Chen, J. (2010), ``Complexity of Penalized Likelihood Estimation," \textit{Journal of Statistical Computation and Simulation}, 80, 747--759.

\item{}
Huo, X. and Ni, X. L. (2007), ``When Do Stepwise Algorithms Meet Subset Selection Criteria?" \textit{The Annals of Statistics}, 35, 870--887.





\item{}
Mazumder, R., Friedman, J. and Hastie, T. (2011), ``SparseNet: Coordinate Descent with Non-Convex Penalties,"
\textit{Journal of the American Statistical Association}, 106, 1125--1138.

\item{}
Meinshausen, N. and Yu, B. (2009). ``Lasso-Type Recovery of Sparse Representations for High-Dimensional Data," \textit{The Annals of Statistics}, 37, 246--270.

\item{}
McLachlan, G. and Krishnan, T. (2008), \textit{The EM Algotithm and Extensions}, 2nd ed., New York: Wiley.


\item{}
Meng, X. L. (1994), ``On the Rate of Convergence of the ECM Algorithm," \textit{The Annals of Statistics}, 22, 326--339.




\item{}
NAE (2008), ``Grand Challenges for Engineering," \textit{Technical report}, The National Academy of Engineering.

\item{}
Nettleton, D. (1999), ``Convergence Properties of the EM Algorithm in Constrained Parameter Spaces," \textit{Canadian Journal of Statistics}, 27, 639--648.

\item{}
Osborne, M. R., Presnell, B. and Turlach, B. (2000), ``On the LASSO and Its Dual," \textit{Journal of Computational and Graphical Statistics}, 9, 319--337.

\item{}
Owen, A. B. (1994), ``Controlling Correlations in Latin Hypercube Samples," \textit{Journal of the American Statistical Association}, 89, 1517--1522.

\item{}
Owen, A. B. (2006), ``A Robust Hybrid of Lasso and Ridge Regression," \textit{Technical Report}.

\item{}
Petry, S. and Tutz, G. (2012), ``Shrinkage and variable selection by polytopes," \textit{Journal of Statistical Planning and Inference}, 9, 48--64.



\item{}
Schifano, E. D., Strawderman, R. and Wells, M. T. (2010), ``Majorization-Minimization Algorithms for Nonsmoothly Penalized Objective Functions,"
\textit{Electronic Journal of Statistics}, 23, 1258--1299.





\item{}
The Economist (2010), ``Special Report on the Data Deluge, (February 27)," \textit{The Economist}, 394, 3--18.

\item{}
Tibshirani, R. (1996), ``Regression Shrinkage and Selection via the Lasso,"
\textit{Journal of the Royal Statistical Society, Ser. B}, 58, 267--288.

\item{}
Tseng, P. (2001), ``Convergence of a Block Coordnate Descent Method for Nondifferentialble Minimization,"
\textit{Journal of Optimization Theory and Applications}, 109, 475--494.

\item{}
Tseng, P. and Yun, S. (2009), ``A Coordinate Gradient Descent Method for Nonsmooth Separable Minimization," \textit{Mathematical Programming B}, 117, 387--423.

\item{}
Tutz, G. and Ulbricht, J. (2009), ``Penalized Regression With Correlation-Based Penalty," \textit{Statistics and Computing}, 19, 239--253 .

\item{}
Varadhan, R. and Roland, C. (2008), ``Simple and Globally Convergent Methods for Accelerating the Convergence of Any EM Algorithm,"
\textit{Scandinavian Journal of Statistics}, 35, 335--353.


\item{}
Wang, H., Li, R., and Tsai, C-L. (2007), ``Tuning Parameter Selectors for the Smoothly Clipped Absolute Deviation Method," \textit{Biometrika}, 94, 553--568.


\item{}
Watkins, D. S. (2002), \textit{Fundamentals of Matrix Computations}, 2nd ed., New York: Wiley.

\item{}
Wilkinson, J. H. (1965), \textit{The Algebraic Eigenvalue Problem}, New York: Oxford University Press.

\item{}
Witten, D. M. and Tibshirani, R. (2011), ``Scout," R package.

\item{}
Wu, C. F. J. (1983), ``On the Convergence Properties of the EM Algorithm," \textit{The Annals of Statistics}, 11, 95--103.

\item{}
Wu, T. and Lange, K. (2008), ``Coordinate Descent Algorithm for Lasso Penalized Regression," \textit{The Annals of Applied Statistics}, 2, 224--244.


\item{}
Xu, H. (2009), ``Algorithmic Construction of Efficient Fractional Factorial Designs With Large Run Sizes," \textit{Technometrics}, 51, 262--277.


\item{}
Zangwill, W. I. (1969), \textit{Nonlinear Programming: A Unified Approach}, Englewood Cliffs, New Jersey: Prentice Hall.

\item{}
Zhang. C-H. (2010), ``Nearly Unbiased Variable Selection under Minimax Concave Penalty," \textit{The Annals of Statistics}, 38, 894--942.


\item{}
Zou, H. and Hastie, T. (2005), ``Regularization and Variable Selection via the Elastic Net,"
\textit{Journal of the Royal Statistical Society, Ser. B}, 67, 301--320.

\item{}
Zou, H. and Li, R. (2008), ``One-step Sparse Estimates in Nonconcave Penalized Likelihood Models," \textit{The Annals of Statistics}, 36, 1509--1533.

\end{description}
\end{document}